\newtheorem{assumption}{Assumption}
\newtheorem{problem}{Problem}
\newtheorem{lemma}{Lemma}
\newtheorem{theorem}{Theorem}
\newtheorem{corollary}{Corollary}
\newtheorem{remark}{Remark}
\def\+{{\!+\!}}
\def\-{{\!-\!}}
\def\op[#1]{{\operatorname{#1}}}
\def\vec[#1]{{\boldsymbol{#1}}}
\def\mat[#1]{{\mathbf{#1}}}
\def\vecspace[#1]{{\mathscr{#1}}}
\def\tr{^{\mathrm{T}}} \def\H{^{\mathrm{H}}}
\def\E{\mathbb{E}}
\def\cov{{\operatorname{cov}}}
\def\N{\mathbb{N}} \def\Z{\mathbb{Z}}
\def\R{\mathbb{R}} \def\C{\mathbb{C}}
\def\im{{\mathcal{j}}}
\def\indk{{\mathcal{k}}} \def\indl{{\mathcal{l}}}
\def\indm{{\mathcal{m}}} \def\indn{{\mathcal{n}}}
\def\indp{{\mathcal{p}}} \def\indq{{\mathcal{q}}}
\def\indr{{\mathcal{r}}} \def\inds{{\mathcal{s}}}
\normalem \setlist{noitemsep} \setcounter{secnumdepth}{3}  \allowdisplaybreaks[2]  \newlength{\casealign}  \newcommand{\printLongest}[1]{\settowidth{\global\casealign}{$#1$}#1}
\newcommand{\continueToAlign}[1]{\mathrlap{#1}\hspace{\casealign}}
\begin{document}

\title{Combined Invariant Subspace \& Frequency-Domain Subspace Method for Identification of Discrete-Time MIMO Linear Systems \tnoteref{fund}}
\author[1]{Jingze You\fnref{fn1}}\ead{2133053@tongji.edu.cn}
\author[1]{Chao Huang\fnref{fn1}}\ead{csehuangchao@tongji.edu.cn}
\author[1,2]{Hao Zhang\fnref{fn2}}\ead{zhang_hao@tongji.edu.cn}
\affiliation[1]{organization={College of Electronic and Information Engineering, Tongji University}, postcode={200092},city={Shanghai}, country={China}}
\affiliation[2]{organization={National Key Laboratory of Autonomous Intelligent Unmanned Systems \& Frontiers Science Center for Intelligent Autonomous Systems, Tongji University}, p={200092}, c={Shanghai}, cy={China}}

\fntext[fn1]{These authors contributed equally to this work and should be considered co-first authors.}
\fntext[fn2]{Corresponding author.}
\fntext[fn3]{The final version of this paper has been accepetd by \emph{Systems \& Control Letters} with DOI: \href{https://doi.org/10.1016/j.sysconle.2023.105641}{10.1016/j.sysconle.2023.105641}.}
\tnotetext[fund]{This work was supported in part by the National Natural Science Foundation of China (Grant No. 61922063, 62273255, 62150026), and in part by the Fundamental Research Funds for the Central Universities.}

\begin{abstract}
    Recently, a novel system identification method based on invariant subspace theory is introduced, aiming to address the identification problem of continuous-time (CT) linear time-invariant (LTI) systems by combining time-domain and frequency-domain methods.
    Subsequently, the combined Invariant-Subspace and Subspace Identification Method (cISSIM) is introduced, enabling direct estimation of CT LTI systems in state-space forms.
    It produces consistent estimation that is robust in an error-in-variable and slow-sampling conditions, while no pre-filtering operation of the input-output signals is needed.
    This paper presents the discrete-cISSIM, which extends cISSIM to discrete-time (DT) systems and offers the following improvements:
    1) the capability to utilize arbitrary discrete periodic excitations while cISSIM uses multi-sine signals;
    2) a faster estimation with reduced computational complexity is proposed;
    3) the covariance estimation problem can be addressed concurrently with the system parameter estimation.
    An implementation of discrete-cISSIM by MATLAB has also been provided.
\end{abstract}
\begin{keyword}
    system identification \sep invariant subspace identification \sep subspace identification \sep discrete Fourier transform \sep covariance estimation
\end{keyword}
\maketitle

\section{Introduction}
System identification of black-box linear time invariant (LTI) plants is an established research area.
Most of the existing methods belong to two categories: \textit{time-domain method} and \textit{frequency-domain method} \cite{SystemIdentification_Ljung_1}.

Time-domain methods directly utilize the sampled data to identify the system.
One type of time-domain methods is based on least-squares, including Box-Jenkins method for SISO systems \cite{SystemIdentification_Iserman_1, NumericalLeastSquare_Bjoorck_1}, and subspace identification method (SIM) for MIMO systems \cite{SID_Overschee_1}. 
Another thoughts include methods based on the statistics, such as expectation maximization methods \cite{CovarianceEstimationML_Shumway_1}.
The advantage of time-domain methods is that they can be applied to both continuous-time (CT) and discrete-time (DT) systems.
However, most of them require additional procedures to resolve the error-in-variable (EIV) problem, such as instrumental variables, bias-eliminating least-squares and total least-squares \cite{EIVProblem_Soderstrom_1}.

Frequency-domain method uses frequency data that are typically obtained through discrete Fourier transformation (DFT) or auto-correlation function applied to the time-domain sampled data.
The parameter estimation usually ends up with a nonlinear optimization problem, such as frequency SIM \cite{SysIden_FreqSubspace_McKelvey_1}.
Frequency-domain methods offer an easy solution to the error-in-variable problem.
However, it should be careful to use them in real-time because of the spectral leakage and aliasing problems from short-time Fourier transformation (STFT) \cite{FreqSystemIdentification_Pintelon_1, SysIden_FreqTransient_pintelon_1}.

The basis of this paper is the Invariant-SubsPace (ISP) method which combines the time-domain and frequency-domain methods.
There have been two published works based on ISP: the \textbf{basic ISP method} \cite{cISSIM_CHuang_1} and the \textbf{combined Invariant-Subspace and Subspace Identification Method (cISSIM)} \cite{cISSIM_CHuang_2}.
The former uses transfer function model and least-squares method to estimate.
The latter uses state-space to model the MIMO system, and the frequency SIM method to estimate the system parameter.
The advantages of cISSIM compared to antecedent methods includes:
1) intrinsic robustness against slow sampling rate and the EIV problem;
2) time-domain samples used directly without pre-filtering.
The method in this paper called \textbf{discrete-cISSIM} extends the cISSIM into DT domain and arbitrary discrete periodic excitations with reduced computational burden.

In addition to system parameters, many applications also require the statistical information of noises \cite{StateEstimation_Simon_1}, such as state estimation.
The existing statistical estimation methods can be classified according to the noise model:
The first type estimates the covariance matrices, but needs the system matrices as \textit{a priori}, such as autocovariance least square (ALS) \cite{ALSMatrix_Odelson_1, ALSVector_Dunik_1}.
The second type can only obtain the innovation model, but can estimate the system and noise parameters together, such as Bayesian \cite{CovarianceEstimationBayesian_Lainiotis_1}, maximum likelihood \cite{CovarianceEstimationML_Shumway_1}, and subspace methods \cite{ARMAModel_Faurre_1, ALSProblem_Kong_2, CovarianceEstimation_Michel_1}.
Up to authors' best knowledge, it is still \textbf{an open problem} for the black-boxed combined estimation of both system parameters and covariance matrix estimation.
Furthermore, the input noise has been not taken into consideration.
This paper combines the ALS method with the existing cISSIM method to solve the above problems, and makes the calculations in frequency domain to speed it up.

Based on combined deterministic-stochastic subspace identification \cite{SID_Overschee_2}, this paper derives an algorithm that is able to estimate the system and covariance parameters altogether.
The \textbf{highlights} of this paper are concluded as follows:
\begin{itemize}
    \item[$\bullet$] (inherited) identify the system with EIV issues;
    \item[$\bullet$] ability to use arbitrary discrete periodic signals;
    \item[$\bullet$] reduced computational burden without precision loss;
    \item[$\bullet$] identify the covariance matrix together, and speed it up by calculating it in frequency-domain.
\end{itemize}

Additionally, an implementation of the discrete-cISSIM including simulation tests via MATLAB has been uploaded to GitHub: \href{https://github.com/wyqy/dcissim}{https://github.com/wyqy/dcissim}.

The remaining part includes the following:
Section \ref{section_problem} gives the description of model, the assumptions and problems to be addressed.
Section \ref{section_theory} introduces the theory of discrete-cISSIM.
Section \ref{section_algorithm} introduces the identification algorithms in detail.
Section \ref{section_consistency} gives the consistency proofs of above algorithms, and \ref{section_simulation} shows the simulation results.
Section \ref{section_conclusion} concludes the paper.

\textbf{Notations}:
The imaginary unit is denoted by $\im$.
Let $x\in\C$, the complex conjugate of $x$ is denoted by $\overline{x}$.

In the following notations, assume $\mat[A] = \begin{bmatrix} \vec[a]_1 & \cdots & \vec[a]_r \end{bmatrix}$, $\mat[B] = \begin{bmatrix} \vec[b]_1 & \cdots & \vec[b]_r \end{bmatrix}$ to be two matrices where $\vec[a]_i$ and $\vec[b]_j$ are all vectors.
Assume $\vec[x]$ as another vector.
The spectral radius is denoted by $\rho(\mat[A])$ if $\mat[A]$ is a square matrix.
The Euclidean norm of vector $\vec[x]$ is denoted by ${\left\Vert\vec[x]\right\Vert}_2$
The spectral and Frobenius norm of matrix $\mat[A]$ are denoted by ${\left\Vert\mat[A]\right\Vert}_2$ and ${\left\Vert\mat[A]\right\Vert}_F$ respectively.
The element-wise matrix multiplication is denoted by $\mat[A] .* \mat[B]$.
The Moore-Penrose inverse is denoted by $\mat[A]^\dagger$.
The Kronecker product of matrices $\mat[A]$ and $\mat[B]$ is denoted by $\mat[A] \otimes \mat[B]$
The Khatri-Rao product of matrices $\mat[A]$ and $\mat[B]$ is denoted by $\mat[A]\odot\mat[B] = \begin{Bmatrix} \vec[a]_1\otimes\vec[b]_1 & \cdots & \vec[a]_r\otimes\vec[b]_r \end{Bmatrix}$.

\section{Problem Formation} \label{section_problem}
Suppose the true plant is captured by the following DT LTI state-space model:
\begin{align}
    \vec[x]_{k+1} & = \mat[A]\vec[x]_k + \mat[B]\vec[u]_k + \vec[\omega]_k, \label{eq_problem_model_state} \\
    \vec[y]^{\rm m}_k & = \mat[C]\vec[x]_k + \mat[D]\vec[u]_k + \vec[\nu]_k, \label{eq_problem_model_output} \\
    \vec[u]^{\rm m}_k & = \vec[u]_k + \vec[\tau]_k, \label{eq_problem_model_input}
\end{align}
where $\vec[x]_k \in \R^n$ is the system state, $\vec[u]^{\rm m}_k \in \R^{\rm m}$ and $\vec[y]^{\rm m}_k \in \R^p$ are the measured input and output, respectively.
The noises $\vec[\omega]_k$, $\vec[\nu]_k$, and $\vec[\tau]_k$ are all assumed as zero-mean Gaussian white noise with the covariance matrices defined as follows:
\begin{equation} \label{eq_problem_covariance}
    \E\left[\begin{pmatrix*} \vec[\omega]_k \\ \vec[\nu]_k \\ \vec[\tau]_k \end{pmatrix*}
        \begin{pmatrix*} \vec[\omega]\tr_k & \vec[\nu]\tr_k & \vec[\tau]\tr_k \end{pmatrix*}\right] =
    \begin{bmatrix*}
        \mat[\Sigma]_{\omega\omega} & \mat[0] & \mat[0] \\ \mat[0] & \mat[\Sigma]_{\nu\nu} & \mat[0] \\ \mat[0] & \mat[0] & \mat[\Sigma]_{\tau\tau}
    \end{bmatrix*},
\end{equation}
where $\Sigma_{\omega\omega}\in\R^{n\times n}$, $\Sigma_{\nu\nu}\in\R^{p\times p}$ and $\Sigma_{\tau\tau}\in\R^{m\times m}$ are all semi-positive definite matrices.

The following assumptions are essential for derivation:

\begin{assumption} \label{assume_stability}
    $\mat[A]$ is asymptotically (Schur) stable.
\end{assumption}

\begin{assumption} \label{assume_minimum}
    $(\mat[A], \mat[B])$ is controllable, and $(\mat[C], \mat[A])$ is observable.
\end{assumption}

\begin{assumption} \label{assume_order_upperbound}
    The upper bound of the system order $\bar{n}$ is given as \emph{a priori}.
\end{assumption}

\begin{assumption} \label{assume_discrete_periodic}
    The excitation $\vec[u]_k$ is discretely periodic: $\vec[u]_{k+T} = \vec[u]_k$ where the minimal period $T\in\mathbb N^+$ is known.
\end{assumption}

The identification problem of the paper is divided into the deterministic and stochastic parts as follows:

\begin{problem}[system parameters identification] \label{problem_deterministic_identification}
    Given the noise corrupted input-output samples $\{\vec[u]_k^{\rm m}\}$, $\{\vec[y]_k^{\rm m}\}$, $k = 1,2,\ldots,N$, the problem is to find \emph{both of the followings} as samples number $N \to \infty$,
    \begin{itemize}
        \item the system order $n$;
        \item the consistent estimation of the quadruple $(\mat[A], \mat[B], \mat[C], \mat[D])$ up to a similarity transformation.
    \end{itemize}
\end{problem}

\begin{problem}[covariance matrices identification] \label{problem_stochastic_identification}
    Given the same data as in Problem \ref{problem_deterministic_identification}, the problem is to find the triple $\left(\mat[\Sigma]_{\omega\omega},\mat[\Sigma]_{\nu\nu},\mat[\Sigma]_{\tau\tau}\right)$ such that \emph{both of the followings} up to a similarity transformation as $N\to\infty$.
\end{problem}

\section{Discrete Invariant Subspace Theory} \label{section_theory}
The invariant subspace theory was made to build up the relationship between the time-domain samples and frequency-domain parameters.
This section will propose a modification to the theory so that the excitation can be arbitrary periodical signal in discrete-time domain.

\subsection{Excitation System}
The excitation system is a \textbf{virtual} dynamic system for modeling input (excitation) signal.
First, we define the \textit{components} of the excitation system as follows:
\begin{align}
    \widetilde{\mat[S]}_r & = \label{eq_element_mat_s}
    \begin{cases*}
        1, & $r = 0$; \\
        \printLongest{\begin{bmatrix*} \cos \omega r & \sin \omega r \\ -\!\sin \omega r & \cos \omega r \end{bmatrix*},} & $r = 1,\ldots,\lfloor{{T}\over{2}}\rfloor$;
    \end{cases*} \\
    \widetilde{\mat[U]}_r & = \label{eq_element_mat_u}
    \begin{cases*}
        \continueToAlign{\vec[b]_r,} & $r = 0$; \\
        \begin{bmatrix*} \vec[a]_r & \vec[b]_r \end{bmatrix*}, & $r = 1,\ldots,\lfloor{{T}\over{2}}\rfloor$;
    \end{cases*} \\
    \widetilde{\vec[v]}_{0,r} & = \label{eq_element_vec_v0} \begin{cases*}
        \continueToAlign{1/\sqrt{2},} & $r = 0$; \\
        \begin{bmatrix*} \sin\phi_r & \cos\phi_r \end{bmatrix*}, & $r = 1,\ldots,\lfloor{{T}\over{2}}\rfloor$,
                    \end{cases*}
\end{align}
where $\vec[a]_r\in\R^m$, $\vec[b]_r\in\R^m$ are constant vectors and $\omega = \frac{2\pi}{T}$.

Then $\vec[u]_k$ which satisfies Assumption \ref{assume_discrete_periodic} can be generated by the full excitation system ($\rm f$ for full) below:
\begin{equation} \begin{aligned} \label{eq_excitation_system}
        \vec[v]_{k+1}^{\rm f} & = \mat[S]^{\rm f}\vec[v]_{k}^{\rm f}, \\
        \vec[u]_k & = \mat[U]^{\rm f}\vec[v]_{k}^{\rm f},
    \end{aligned} \end{equation}
where $\mat[S]^{\rm f}$, $\mat[U]^{\rm f}$ and $\vec[v]_0^{\rm f}$ are defined as follows:
\begin{align}
    \mat[S]^{\rm f} & = \op[block diag]\left(\widetilde{\mat[S]}_0, \widetilde{\mat[S]}_1, \ldots, \widetilde{\mat[S]}_{\lfloor{{T}\over{2}}\rfloor} \right), \label{eq_excitation_system_matrix} \\
    \mat[U]^{\rm f} & = \begin{bmatrix*} \widetilde{\mat[U]}_0 & \widetilde{\mat[U]}_1 & \cdots & \widetilde{\mat[U]}_{\lfloor{{T}\over{2}}\rfloor} \end{bmatrix*}, \label{eq_excitation_output_matrix} \\
    \vec[v]_0^{\rm f} & = \begin{bmatrix*} \widetilde{\vec[v]}_{0,1} & \widetilde{\vec[v]}_{0,1} & \cdots & \widetilde{\vec[v]}_{0,\lfloor{{T}\over{2}}\rfloor} \end{bmatrix*}\tr. \label{eq_excitation_init_state}
\end{align}
Equations \eqref{eq_excitation_system}-\eqref{eq_excitation_init_state} define a neutrally-stable DT LTI system generating multi-sine signals.
According to discrete Fourier transform, this system can be used to represent arbitrary discrete periodic signals.

\begin{remark} \label{remark_fourier_coefficient}
        Full excitation system \eqref{eq_excitation_system}-\eqref{eq_excitation_init_state} is the frequency expression of $\vec[u]_k$.
    Assume $\phi_r = 0$ for $r > 1$.
        For example, when $T$ is \textbf{odd}, the Fourier coefficient denoted by $\widetilde{\mathcal{U}}_r$ satisfies that $\widetilde{\mathcal{U}}_0 = {1\over\sqrt{2}}\vec[b]_0$, $\widetilde{\mathcal{U}}_r = {1\over2}(\vec[b]_r-\im\vec[a]_r)$ for $r=1,\ldots,\lfloor{{T}\over{2}}\rfloor$, such that the following equation holds:
    \begin{equation*}
        \vec[u]_k = \sum_{r=0}^{T-1}{\widetilde{\mathcal{U}}_r e^{\im k\omega r}} = \vec[b]_0 + \sum_{r=1}^{\lfloor{{T}\over{2}}\rfloor}{\vec[a]_r\sin(k\omega r)+\vec[b]_r\cos(k\omega r)},
    \end{equation*}
        As $\vec[u]_k$ is real, $\widetilde{\mathcal{U}}_r = \overline{\widetilde{\mathcal{U}}_{N-r \mod{N}}}$ for $r=0, \ldots, T-1$.
        It means that there are $\lfloor{{T}\over{2}}\rfloor + 1$ \textit{independent} frequencies.
        It is also discussed in Lemma \ref{lemma_orthogonal}.
\end{remark}

\subsection{Invariant Subspace Equation}
The full excitation system \eqref{eq_excitation_system}-\eqref{eq_excitation_init_state} can describe $\vec[u]_k$ rigorously.
However, some frequencies are not useful, even detrimental in identification.
Discrete-time domain allows to choose an arbitrary set of the excitation frequencies set to use in identification:
Define the \textbf{overall excitation frequency set} as $\mathbb{F}^{\rm f} = \left\{ 0, \omega, \ldots, \omega\lfloor{{T}\over{2}}\rfloor \right\}$ ($\rm f$ for full),
the \textbf{interested excitation frequency set} as $\mathbb{F}^{\rm i} = \left\{ \omega\indk_1, \ldots, \omega\indk_q \right\} \subset \mathbb{F}^{\rm f}$ ($\rm i$ for interested),
and the \textbf{complement set} as $\mathbb{F}^{\rm c} = \mathbb{F}^{\rm f} \backslash \mathbb{F}^{\rm i} = \left\{ \omega\indl_1, \ldots, \omega\indl_{\lfloor{{T}\over{2}}\rfloor-q+1} \right\}$ ($\rm c$ for complement).
Thus, the divided invariant subspace equation (ISE) is proposed as follows:

\begin{theorem} \label{theorem_ise}
        Suppose that true plant system \eqref{eq_problem_model_state}-\eqref{eq_problem_model_input} satisfies Assumption \ref{assume_stability}.
    Excite the system with the signal $\vec[u]_k$ which satisfies Assumption \ref{assume_discrete_periodic}.
        Then the state $\vec[x]_k$, the measured output $\vec[y]_k^{\rm m}$ and measured input $\vec[y]_k^{\rm m}$ satisfy the following regressions:
    \begin{align}
        \vec[x]_k & = \mat[X]^{\rm i}\vec[v]_k^{\rm i} + \mat[X]^{\rm c}\vec[v]_k^{\rm c} + \vec[z]_k, \label{eq_state_regression}    \\
        \vec[y]_k^{\rm m} & = \mat[Y]^{\rm i}\vec[v]_k^{\rm i} + \mat[Y]^{\rm c}\vec[v]_k^{\rm c} + \vec[r]_k, \label{eq_output_regression}   \\
        \vec[u]_k^{\rm m} & = \mat[U]^{\rm i}\vec[v]_k^{\rm i} + \mat[U]^{\rm c}\vec[v]_k^{\rm c} + \vec[\tau]_k. \label{eq_input_regression}
    \end{align}
    where $\mat[X]^{\rm i}$, $\mat[X]^{\rm c}$, $\mat[Y]^{\rm i}$, $\mat[Y]^{\rm c}$, $\mat[U]^{\rm i}$ and $\mat[U]^{\rm c}$ satisfy the following Sylvester equations called ISE:
    \begin{align}
        \mat[X]^{\rm i}\mat[S]^{\rm i} + \mat[X]^{\rm c}\mat[S]^{\rm c}
            & = \mat[A](\mat[X]^{\rm i} + \mat[X]^{\rm c}) + \mat[B](\mat[U]^{\rm i}   + \mat[U]^{\rm c}), \label{eq_ise_state}  \\
        \mat[Y]^{\rm i} + \mat[Y]^{\rm c}
            & = \mat[C](\mat[X]^{\rm i} + \mat[X]^{\rm c}) + \mat[D](\mat[U]^{\rm i} + \mat[U]^{\rm c}); \label{eq_ise_output}
    \end{align}
    while $\vec[z]_k$ and $\vec[r]_k$ satisfy the following disturbance system:
    \begin{align}
        \vec[z]_{k+1} & = \mat[A]\vec[z]_k + \vec[\omega]_k, \label{eq_disturbance_state}                                              \\
        \vec[r]_k     & = \mat[C]\vec[z]_k + \vec[\nu]_k, \label{eq_disturbance_output}                                                \\
        \vec[z]_0     & = \vec[x]_0 - \mat[X]^{\rm i}\vec[v]_0^{\rm i} - \mat[X]^{\rm c}\vec[v]_0^{\rm c}, \label{eq_disturbance_init}
    \end{align}
    where $\vec[z]_k = \vec[x]_k - \mat[X]^{\rm i}\vec[v]_k^{\rm i} - \mat[X]^{\rm c}\vec[v]_k^{\rm c}$ is the state disturbance, while $\vec[r]_k = \vec[y]_k - \mat[Y]^{\rm i}\vec[v]_k^{\rm i} - \mat[Y]^{\rm c}\vec[v]_k^{\rm c}$ is the output disturbance.
\end{theorem}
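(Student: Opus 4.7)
My plan is to obtain the ISE solution via the standard discrete Sylvester-equation existence result, split it along the interested/complement frequency partition, and then verify by direct substitution that the resulting residual obeys the stated noise-driven subsystem.

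First, I would observe that $\mat[S]^{\rm f}$ in \eqref{eq_excitation_system_matrix} is block-diagonal with the scalar $1$ and $2\times 2$ rotation blocks, so its spectrum lies entirely on the unit circle, whereas Assumption~\ref{assume_stability} places $\sigma(\mat[A])$ strictly inside the open unit disk. Since the two spectra are disjoint, the discrete Sylvester equation
\begin{equation*}
    \mat[X]^{\rm f}\mat[S]^{\rm f} = \mat[A]\mat[X]^{\rm f} + \mat[B]\mat[U]^{\rm f}
\end{equation*}
has a unique solution $\mat[X]^{\rm f}$. The interested/complement partition $\mat[S]^{\rm f}=\op[block diag](\mat[S]^{\rm i},\mat[S]^{\rm c})$ together with $\mat[U]^{\rm f}=\begin{bmatrix}\mat[U]^{\rm i}&\mat[U]^{\rm c}\end{bmatrix}$ induces the column partition $\mat[X]^{\rm f}=\begin{bmatrix}\mat[X]^{\rm i}&\mat[X]^{\rm c}\end{bmatrix}$, where each block satisfies its own Sylvester relation; adding these two block identities yields \eqref{eq_ise_state}. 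Setting $\mat[Y]^{\rm i}:=\mat[C]\mat[X]^{\rm i}+\mat[D]\mat[U]^{\rm i}$ and $\mat[Y]^{\rm c}:=\mat[C]\mat[X]^{\rm c}+\mat[D]\mat[U]^{\rm c}$ trivially delivers \eqref{eq_ise_output}.

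Next, I would define the residual $\vec[z]_k:=\vec[x]_k-\mat[X]^{\rm i}\vec[v]_k^{\rm i}-\mat[X]^{\rm c}\vec[v]_k^{\rm c}$ and push it one step forward using \eqref{eq_problem_model_state} together with $\vec[v]_{k+1}^{\rm i}=\mat[S]^{\rm i}\vec[v]_k^{\rm i}$, $\vec[v]_{k+1}^{\rm c}=\mat[S]^{\rm c}\vec[v]_k^{\rm c}$, and $\vec[u]_k=\mat[U]^{\rm i}\vec[v]_k^{\rm i}+\mat[U]^{\rm c}\vec[v]_k^{\rm c}$. The per-block Sylvester identities then cancel every term except $\mat[A]\vec[z]_k+\vec[\omega]_k$, producing \eqref{eq_disturbance_state} and the stated $\vec[z]_0$. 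Substituting the same decomposition of $\vec[x]_k$ into \eqref{eq_problem_model_output} and invoking the definitions of $\mat[Y]^{\rm i},\mat[Y]^{\rm c}$ gives $\vec[r]_k=\mat[C]\vec[z]_k+\vec[\nu]_k$, establishing \eqref{eq_disturbance_output}. The input regression \eqref{eq_input_regression} is immediate from \eqref{eq_problem_model_input} combined with $\vec[u]_k=\mat[U]^{\rm i}\vec[v]_k^{\rm i}+\mat[U]^{\rm c}\vec[v]_k^{\rm c}$.

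The main obstacle is conceptual rather than computational: as written, \eqref{eq_ise_state} is a single summed identity and does not by itself pin down $(\mat[X]^{\rm i},\mat[X]^{\rm c})$ separately. The argument must therefore first be carried out at the level of the full excitation system, and only afterwards projected onto the two frequency blocks, relying on spectral disjointness of $\mat[A]$ from both $\mat[S]^{\rm i}$ and $\mat[S]^{\rm c}$ (inherited from $\mat[S]^{\rm f}$) to legitimise the split. Once this bookkeeping is in place, every remaining step is a routine algebraic substitution.
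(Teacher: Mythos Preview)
Your proposal is correct and follows essentially the same route as the paper: establish the full Sylvester equation $\mat[X]^{\rm f}\mat[S]^{\rm f}=\mat[A]\mat[X]^{\rm f}+\mat[B]\mat[U]^{\rm f}$ via spectral disjointness of $\mat[A]$ and $\mat[S]^{\rm f}$, then partition everything along the block-diagonal split $\mat[S]^{\rm f}=\op[block diag](\mat[S]^{\rm i},\mat[S]^{\rm c})$ to obtain the interested/complement form and the disturbance recursion. The paper simply outsources the first step to \cite{cISSIM_CHuang_1} and \cite{OutputRegulation_Huang_1}, whereas you spell it out explicitly and correctly flag that the individual blocks must be determined at the full-system level before summing---a point the paper leaves implicit.
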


\begin{proof}
        From Section 4 in \cite{cISSIM_CHuang_1}, the true plant model can be divided into the \textit{integrated} ISE and disturbance system:
    \begin{equation} \begin{aligned} \label{eq_iseproof_subsystem}
            \mat[X]^{\rm f}\mat[S]^{\rm f}
                & = \mat[A]\mat[X]^{\rm f} + \mat[B]\mat[U]^{\rm f},
                & \quad \vec[z]_{k+1}
                & = \mat[A]\vec[z]_k + \vec[\omega]_k, \\
            \mat[Y]^{\rm f}
                & = \mat[C]\mat[X]^{\rm f} + \mat[D]\mat[U]^{\rm f},
                & \quad \vec[r]_k
                & = \mat[C]\vec[z]_k + \vec[\nu]_k, \\
            &   & \quad \vec[z]_0
                & = \vec[x]_0 -\mat[X]^{\rm f}\vec[v]_0^{\rm f};
        \end{aligned} \end{equation}
    and the following equations holds from definition:
    \begin{equation} \begin{aligned} \label{eq_iseproof_definition}
            \vec[x]_k         & = \mat[X]^{\rm f}\vec[v]_k^{\rm f} + \vec[z]_k,
                              & \quad \vec[u]_k^{\rm m}                         & = \mat[U]^{\rm f}\vec[v]_k^{\rm f} + \vec[\tau]_k, \\
            \vec[y]_k^{\rm m} & = \mat[Y]^{\rm f}\vec[v]_k^{\rm f} + \vec[r]_k.                                                      \\
        \end{aligned} \end{equation}
        The existence and uniqueness of the solution to ISE $\{\mat[X]^{\rm f}, \mat[Y]^{\rm f}\}$ hold true because $\mat[S]^{\rm f}$ and $\mat[A]$ have different eigenvalues \cite[Appendix A]{OutputRegulation_Huang_1}.

        The matrices and vector in the above two subsystems can be divided equivalently according to the frequency sets $\mathbb{F}^{\rm i}$ and $\mathbb{F}^{\rm c}$:
    \begin{itemize}
        \item $\mat[S]^{\rm f} \Rightarrow \op[block diag]\left( \mat[S]^{\rm i}, \mat[S]^{\rm c} \right)$, and $\vec[v]_{k}^{\rm f} \Rightarrow \begin{bmatrix*} (\vec[v]_{k}^{\rm i})\tr & (\vec[v]_{k}^{\rm c})\tr \end{bmatrix*}\tr$;
        \item $\mat[U]^{\rm f} \Rightarrow \begin{bmatrix*} \mat[U]^{\rm i} & \mat[U]^{\rm c} \end{bmatrix*}$, and so do the $\mat[X]^{\rm f}$, $\mat[Y]^{\rm f}$.
    \end{itemize}
    The division is valid because the definitions and orders of their \textit{components} are independent in the above matrices and vector.
        Then the integrated subsystems \eqref{eq_iseproof_subsystem}-\eqref{eq_iseproof_definition} can be rewritten into decomposed version \eqref{eq_ise_state}-\eqref{eq_disturbance_init} using block matrix multiplication.
    That completes the proof.
\end{proof}

\begin{remark}[stationary auto-correlation] \label{remark_ise_stationary_process}
        The disturbance system \eqref{eq_disturbance_state}-\eqref{eq_disturbance_init} describes both the transient and noises process in identification:
    The transient process is described by equation \eqref{eq_disturbance_init}, while the noise is described by a DT LTI autoregressive moving average (ARMA) process driven by a zero-mean Gaussian noises.
        The disturbance system is asymptotically stationary as $N\to\infty$ since the system is stable (Assumption \ref{assume_stability}).

        When the system has reached the steady state, it is an ergodic process, which means the second moments are time invariant \cite{StochasticDifferentialEquation_Gard_1}.
        Define the auto-correlation matrices as follows:
    \begin{equation} \begin{aligned} \label{eq_correlation_definition}
            \mat[\Xi]_{zz} & = \E(\vec[z]_k\vec[z]\tr_k), \quad \mat[\Xi]_{tt} = \E(\vec[\tau]_k\vec[\tau]\tr_k), \\
            \mat[\Xi]_{rr}^\indl & = \E(\vec[r]_{k+\indl}\vec[r]\tr_k), \indl = 0, 1, \ldots. \\
        \end{aligned} \end{equation}
    In this case, all the auto-correlation matrices are independent with $k$, allowing them to be computed as follows:
    \begin{align}
        \mat[\Xi]_{zz} & = \mat[A]\mat[\Xi]_{zz}\mat[A]\tr + \mat[\Sigma]_{\omega\omega}, \label{eq_correlation_zz} \\
        \mat[\Xi]_{rr}^\indl & =
        \begin{cases*}
            \mat[C]\mat[\Xi]_{zz}\mat[C]\tr + \mat[\Sigma]_{\nu\nu}, & $\indl = 0$; \\
            \mat[C]\mat[A]^\indl\mat[\Xi]_{zz}\mat[C]\tr, & $\indl = 1,2,\ldots$.
        \end{cases*} \label{eq_correlation_rr}
    \end{align}
\end{remark}

\section{Identification Algorithms} \label{section_algorithm}
This section includes three parts:
Sections \ref{section_isim_identification} and \ref{section_sim_identification} solve the Problem \ref{problem_deterministic_identification}; Section \ref{section_noise_identification} solves the problem \ref{problem_stochastic_identification} along with the precedent sections; and Section \ref{section_algorithm_conclusion} summarizes them.

\subsection{Invariant Subspace Identification} \label{section_isim_identification}
This subsection uses the linear regression models \eqref{eq_output_regression}-\eqref{eq_input_regression} to estimate the invariant subspace coefficients $\mat[Y]^{\rm i}$ and $\mat[U]^{\rm i}$ selected in $\mathbb{F}^{\rm i}$.
Since the two models use the same regressor, they are written into one regression model using the following notations:
\begin{equation*} \begin{aligned}
    & (\vec[z]_k^{\rm m})\tr = \begin{bmatrix*} (\vec[y]_0^{\rm m})\tr & (\vec[u]_0^{\rm m})\tr \end{bmatrix*}, \\
    & \mat[Z]_N = \begin{bmatrix*} (\vec[z]_0^{\rm m})\tr \\ \vdots \\ (\vec[z]_{N-1}^{\rm m})\tr \end{bmatrix*},
        \mat[E]_N = \begin{bmatrix*} \vec[r]\tr_0 & \vec[\tau]\tr_0 \\ \vdots & \vdots \\ \vec[r]_{N-1}\tr & \vec[\tau]_{N-1}\tr \end{bmatrix*}, \\
    & \mat[F]_N^{\rm i} = \begin{bmatrix*} \vec[v]_0^{\rm i} & \cdots & \vec[v]_{N-1}^{\rm i} \end{bmatrix*}\tr,
        \mat[R]^{\rm i} = \begin{bmatrix*} (\mat[Y]^{\rm i})\tr & (\mat[U]^{\rm i})\tr \end{bmatrix*},
\end{aligned} \end{equation*}
where the matrices $\mat[F]_N^{\rm c}$ and $\mat[R]^{\rm c}$ for complement frequencies are defined similarly.
The linear regression equation can be written as follows:
\begin{equation} \label{eq_isim_ols_model}
    \mat[Z]_N = \mat[F]_N^{\rm i}\mat[R]^{\rm i} + \mat[F]_N^{\rm c}\mat[R]^{\rm c} + \mat[E]_N.
\end{equation}

In the \textit{trivial} case that $\mathbb{F}^{\rm c}$ is null, the model is degenerated into $\mat[Z]_N = \mat[F]_N^{\rm i}\mat[R]^{\rm i} + \mat[E]_N$.
To minimize the quadratic error loss, the pseudo-inverse based least-square method is used in the precedent cISSIM \cite{cISSIM_CHuang_2}:
ordinary least-squares (OLS) when offline: $\hat{\mat[R]}^{\rm i} = (\mat[F]_N^{\rm i})^\dagger \mat[Z]_N$; and recursive least-squares (RLS) when online (see \cite[Algorithm 1]{cISSIM_CHuang_2}).
It takes $\mathcal{O}(q^2N)$ computational operations.

In the DT domain, model \eqref{eq_isim_ols_model} can be handled within only $\mathcal{O}(qN)$ or $\mathcal{O}(N\log T)$ operations, regardless of $\mathbb{F}^{\rm c}$ is null or not.
The only requirement is that the sample number $N = kT$.
Propose the following orthogonal lemma first:

\begin{lemma} \label{lemma_orthogonal}
    Define $\mat[F]_N^{\rm i}$ and $\mat[F]_N^{\rm c}$ as above,
    When $N = kT$, $k = 1, 2, \ldots$, it holds that,
    \begin{equation} \label{eq_lemma_orthogonal}
        (\mat[F]_{kT}^{\rm f})\tr\mat[F]_{kT}^{\rm f} =
        \begin{cases*}
            \frac{kT}{2}\mat[I]_{2\lfloor{{T}\over{2}}\rfloor+1}, & $T$ is odd; \\
            \frac{kT}{2}\op[blkdiag]\left( \mat[I]_{2\lfloor{{T}\over{2}}\rfloor}, 2\mat[H] \right), & $T$ is even,
        \end{cases*}
    \end{equation}
    where $H = \begin{bmatrix*}
            \sin^2\!\phi_{{{T}\over{2}}} & \cos\!\phi_{{{T}\over{2}}}\sin\!\phi_{{{T}\over{2}}} \\
            \cos\!\phi_{{{T}\over{2}}}\sin\!\phi_{{{T}\over{2}}} & \cos^2\!\phi_{{{T}\over{2}}}
        \end{bmatrix*}$.
\end{lemma}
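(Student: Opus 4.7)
The plan is to compute $\vec[v]_j^{\rm f}$ in closed form, expand $(\mat[F]_{kT}^{\rm f})\tr\mat[F]_{kT}^{\rm f} = \sum_{j=0}^{kT-1} \vec[v]_j^{\rm f}(\vec[v]_j^{\rm f})\tr$ block by block, and reduce every entry to a geometric sum of the form $\sum_{j=0}^{kT-1} e^{\im j\omega m}$, which equals $kT$ when $m\equiv 0\pmod T$ and vanishes otherwise.

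First, I would exploit the rotation structure of $\widetilde{\mat[S]}_r$ together with the initial vector $\widetilde{\vec[v]}_{0,r}$ to obtain a closed form for each block of $\vec[v]_j^{\rm f}$: the $r=0$ block is the constant $1/\sqrt{2}$, while for $r\geq 1$ the $r$-th block equals $[\sin(j\omega r+\phi_r),\,\cos(j\omega r+\phi_r)]\tr$, as can be seen by expanding $\widetilde{\mat[S]}_r^{\,j}\,\widetilde{\vec[v]}_{0,r}\tr$ via the angle-sum identities. This produces a natural $(\lfloor T/2\rfloor+1)\times(\lfloor T/2\rfloor+1)$ block partition of $(\mat[F]_{kT}^{\rm f})\tr\mat[F]_{kT}^{\rm f}$ indexed by frequency pairs $(r,s)$.

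Next, product-to-sum identities turn the entries of block $(r,s)$ into $kT$-term sums of $\cos(j\omega(r\pm s)+\theta)$ and $\sin(j\omega(r\pm s)+\theta)$, which vanish unless $r\pm s\equiv 0\pmod T$. Since $r,s\in\{0,1,\ldots,\lfloor T/2\rfloor\}$ gives $|r-s|<T$ and $r+s\leq 2\lfloor T/2\rfloor$, the only surviving contributions are: (i) $r=s$ with $2r\not\equiv 0\pmod T$, which forces the $2\times 2$ diagonal block to $(kT/2)\mat[I]_2$ via $\sin^2+\cos^2=1$, independently of $\phi_r$; (ii) the Nyquist case $T$ even with $r=s=T/2$, where $2r\equiv 0\pmod T$ keeps the cross terms alive and yields exactly $kT\cdot H$. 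The $r=0$ row and column contribute $kT/2$ on the diagonal and zeros off the diagonal, since $r\not\equiv 0\pmod T$ for $r\in\{1,\ldots,\lfloor T/2\rfloor\}$. Reassembling the blocks matches \eqref{eq_lemma_orthogonal}.

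The main obstacle I anticipate is the clean bookkeeping of the Nyquist edge case: one must confirm that the off-diagonal blocks joining $r=T/2$ with any $s<T/2$ still vanish (since $r+s<T$ and $r-s\neq 0$ remain in the safe range) while simultaneously showing that the diagonal block at $r=T/2$ does not collapse to $(kT/2)\mat[I]_2$ but to the rank-one matrix $kT\cdot H$, reflecting the single real degree of freedom carried by the Nyquist frequency (and explaining the singular $2H$ factor rather than $I_2$ in the formula). Once this case is isolated from the generic one, the remainder is a mechanical assembly of the trigonometric sums.
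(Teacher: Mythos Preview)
Your proposal is correct and follows essentially the same route as the paper: both compute the closed form $v_j^{2r-1}=\sin(j\omega r+\phi_r)$, $v_j^{2r}=\cos(j\omega r+\phi_r)$, reduce the Gram-matrix entries to geometric sums of complex exponentials (the paper via Euler's formula and the inner product $\langle \mathbcal{e}_m,\mathbcal{e}_n\rangle$, you via product-to-sum identities), invoke the orthogonality $\sum_{j=0}^{kT-1}e^{\im j\omega m}=kT\cdot[m\equiv 0\bmod T]$, and then isolate the Nyquist block $r=s=T/2$ as the only case where the $r+s$ contribution survives. Your identification of the Nyquist bookkeeping as the one nontrivial point matches the paper's case split exactly.
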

\begin{proof} See \ref{appendix_orthogonal_projection}. \end{proof}

\begin{corollary} \label{corollary_orthogonal}
    When $N = kT$, and $r = {{{T}\over{2}}}$ is excluded from $\mathbb{F}^{\rm f}$ if $T$ is even, then there is:
    \begin{equation} \label{eq_orthogonal_corollary}
        (\mat[F]_{kT}^{\rm i})\tr\mat[F]_{kT}^{\rm i} = \frac{kT}{2}\mat[I]_s, \quad
        (\mat[F]_{kT}^{\rm i})\tr\mat[F]_{kT}^{\rm c} = \mat[0].
    \end{equation}
\end{corollary}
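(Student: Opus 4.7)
The plan is to deduce Corollary \ref{corollary_orthogonal} as a direct bookkeeping consequence of Lemma \ref{lemma_orthogonal}, using the fact that the frequency sets $\mathbb{F}^{\rm f}$, $\mathbb{F}^{\rm i}$ and $\mathbb{F}^{\rm c}$ induce a natural column partition of $\mat[F]_{kT}^{\rm f}$. First I would write $\mat[F]_{kT}^{\rm f} = \begin{bmatrix} \mat[F]_0 & \mat[F]_1 & \cdots & \mat[F]_{\lfloor T/2\rfloor}\end{bmatrix}$, where the block $\mat[F]_r$ collects exactly the columns generated by the $r$-th frequency component of \eqref{eq_element_mat_s}--\eqref{eq_element_vec_v0}. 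Under this partition, $\mat[F]_{kT}^{\rm i}$ and $\mat[F]_{kT}^{\rm c}$ are concatenations of the blocks $\mat[F]_{\indk_j}$ with $\omega \indk_j\in\mathbb{F}^{\rm i}$ and $\mat[F]_{\indl_i}$ with $\omega \indl_i\in\mathbb{F}^{\rm c}$ respectively, and by definition the two index families are disjoint.

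Next, I would read off the sub-block identities from Lemma \ref{lemma_orthogonal}. Because $(\mat[F]_{kT}^{\rm f})\tr\mat[F]_{kT}^{\rm f}$ is block-diagonal in this partition, one has $(\mat[F]_r)\tr\mat[F]_s = \mat[0]$ for $r\ne s$ and $(\mat[F]_r)\tr\mat[F]_r = (kT/2)\mat[I]$ for every $r\ne T/2$; the only exception is the Nyquist block $(\mat[F]_{T/2})\tr\mat[F]_{T/2} = kT\,\mat[H]$, which is not a scalar multiple of the identity. The hypothesis of the corollary removes precisely this offending index from $\mathbb{F}^{\rm f}$ whenever $T$ is even, so every frequency appearing in $\mathbb{F}^{\rm i}\cup\mathbb{F}^{\rm c}$ is of the regular type.

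From here the two identities follow mechanically. The product $(\mat[F]_{kT}^{\rm i})\tr\mat[F]_{kT}^{\rm i}$ assembles as a block-diagonal matrix whose $q$ diagonal blocks are each $(\mat[F]_{\indk_j})\tr\mat[F]_{\indk_j} = (kT/2)\mat[I]$, yielding $(kT/2)\mat[I]_s$; meanwhile every block of $(\mat[F]_{kT}^{\rm i})\tr\mat[F]_{kT}^{\rm c}$ has the form $(\mat[F]_{\indk_j})\tr\mat[F]_{\indl_i}$ with $\indk_j\ne\indl_i$, hence vanishes by the same lemma. The main obstacle is essentially notational rather than mathematical: one only needs to keep the column partition straight and observe that excluding $r=T/2$ is exactly what strips the anomalous $2\mat[H]$ block from the Gramian so that a scalar-identity form can be guaranteed on every interested subset.
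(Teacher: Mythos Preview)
Your proposal is correct and matches the paper's intent: the paper states the corollary immediately after Lemma~\ref{lemma_orthogonal} without a separate proof, treating it as a direct consequence of the block-diagonal structure of $(\mat[F]_{kT}^{\rm f})\tr\mat[F]_{kT}^{\rm f}$ established there. Your argument simply makes explicit the column partition by frequency index and the observation that excluding $r=T/2$ removes the one anomalous $2\mat[H]$ block, which is exactly the reasoning the paper leaves implicit.
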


When Corollary \ref{corollary_orthogonal} holds, $\mat[F]_N^{\rm i}$ is of full column rank and the pseudo inverse can be simplified: $(\mat[F]_N^{\rm i})^\dagger = \frac{2}{kT}(\mat[F]_N^{\rm i})\tr$.
Hence, the first residual term $\mat[F]_N^{\rm c}\mat[R]^{\rm c} = \mat[0]$ when $N = kT$ or $N\to\infty$, and the second residual term $\mat[E]_N$ is asymptotically disappeared when $N\to\infty$ (see Lemma \ref{lemma_isim_consistency} for details).
Thus, the offline identification can be rewritten using \textbf{matrix multiplication} as Algorithm \ref{algorithm_isim_ols}.
The matrix multiplication is partitioned for better cache behavior during implementation \cite{Numerical_Cormen_1}.

\begin{algorithm}[htb]
    \LinesNumbered \SetAlgoLined
    \caption{Offline ISP by Matrix Multiplication} \label{algorithm_isim_ols}
    \KwData{ $\mat[Z]_N$, $\mat[F]_N$. }
    \KwResult{ $\hat{\mat[R]}^{\rm i}$. }

    Initialize $\hat{\mat[R]}^{\rm i} = \mat[0]$\;
    \For{$k = 1, \ldots, \lfloor{{N}\over{T}}\rfloor$}{
    $\hat{\mat[R]}^{\rm i} \leftarrow \hat{\mat[R]}^{\rm i} + \left(\lfloor{{N}\over{T}}\rfloor {T\over 2}\right)^{-1} \times (\mat[F]_T^{\rm i})\tr (\mat[Z]_N)_{\textrm{the $k^\text{th}$ $T$ rows}}$\;
    }
    \Return $\hat{\mat[R]}^{\rm i}$.
\end{algorithm}

The related online identification is given as Algorithm \ref{algorithm_isim_rls} shown.
Note that it outputs only when $N = kT$, $k\in\N^+$.

\begin{algorithm}[htb]
    \LinesNumbered \SetAlgoLined
    \caption{Online ISP by Matrix Multiplication} \label{algorithm_isim_rls}
    \KwData{ $\{(\vec[z]_k^{\rm m})\tr\}$, $k = 0,1,2,\ldots$. }
    \KwResult{ $\hat{\mat[R]}^{\rm i}$. }
    \tcc{Assume $\indk_1 = 0$ in $\mathbb{F}^{\rm i} = \left\{ \omega\indk_1, \ldots, \omega\indk_q \right\}$}

    Initialize $\hat{\mat[R]}^{\rm i}, \hat{\mathbcal{W}}_{\rm on} = \mat[0]$, $k = 0$, $t = 0$\;
    \While{true}{
        $k \leftarrow k + 1$\;
        $\vec[w]_k\tr \leftarrow \begin{bmatrix*} 1 & \sin(k\omega\indk_1+\phi_{\indk_1}) & \cos(k\omega\indk_1+\phi_{\indk_1}) & \cdots \end{bmatrix*}$\;
        $\hat{\mathbcal{W}}_{\rm on} \leftarrow \hat{\mathbcal{W}}_{\rm on} + \vec[w]_k (\vec[z]_k^{\rm m})\tr$\;
        \If{$k \mod T = 0$}{
            $t \leftarrow t + 1$\;
            $\hat{\mathbcal{W}}_{\rm on} \leftarrow {2\over T} \times \hat{\mathbcal{W}}_{\rm on}$\;
            \textbf{update} $\hat{\mat[R]}_t^{\rm i} \leftarrow {{t-1}\over{t}} \times \hat{\mat[R]}_{t-1}^{\rm i} + {{1}\over{t}} \times \hat{\mathbcal{W}}_{\rm on}$\;
            $\hat{\mathbcal{W}}_{\rm on} \leftarrow \mat[0]_{s\times(p+m)}$.
        }
    }
\end{algorithm}

In fact, Algorithms \ref{algorithm_isim_ols} and \ref{algorithm_isim_rls} are the variants of \textbf{DFT}, which ensures the ability of eliminating the aliasing and unwanted frequencies given $N = kT$.
Therefore, when $q \gg 1$, it can be accelerated by using \textbf{fast Fourier transformation} (FFT), which consumes $\mathcal{O}(N\log T)$ operations.
Algorithm \ref{algorithm_isim_fft} shows an example for $q = \lfloor{{T+1}\over{2}}\rfloor$.

\begin{algorithm}[htb]
    \LinesNumbered \SetAlgoLined
    \caption{Offline ISP by FFT when $q = \lfloor{{T+1}\over{2}}\rfloor$} \label{algorithm_isim_fft}
    \KwData{ $\mat[Z]_N$, $\mat[F]_N$. }
    \KwResult{ $\hat{\mat[R]}^{\rm i}$. }

    Initialize $\hat{\mathbcal{W}}_{\rm fft} = \mat[0]$\;
    \For{$k = 1, \ldots, \lfloor{{N}\over{T}}\rfloor$}{
        \tcc{Do FFT column-wisely}
        $\hat{\mathbcal{W}}_{\rm fft} \leftarrow \hat{\mathbcal{W}}_{\rm fft} + \left(\lfloor{{N}\over{T}}\rfloor T\right)^{-1} \times \op[fft]\left[ (\mat[Z]_N)_{\textrm{the $k^\text{th}$ $T$ rows}} \right]$\;
    }
    $\hat{\mathbcal{A}}_{\rm fft} \leftarrow -2\op[imag]\left[(\hat{\mathbcal{W}}_{\rm fft})_\textrm{$2{\sim}\lfloor{{T}\over{2}}\rfloor+1$}^\textrm{columns at}\right]$\;
    $\hat{\mathbcal{B}}_{\rm fft} \leftarrow \begin{bmatrix*}
        \sqrt{2}(\hat{\mathbcal{W}}_{\rm fft})_\textrm{the $1^\text{st}$}^\textrm{column} &  
        2\op[real]\left[(\hat{\mathbcal{W}}_{\rm fft})_\textrm{$2{\sim}\lfloor{{T}\over{2}}\rfloor+1$}^\textrm{columns at}\right]
    \end{bmatrix*}$\;
    \Return $\hat{\mat[R]}^{\rm i} \leftarrow \textrm{interlace along the rows}\left(\hat{\mathbcal{B}}_{\rm fft}, \hat{\mathbcal{A}}_{\rm fft}\right)$.
\end{algorithm}

The two methods proposed here have different applicable scenarios: \textit{matrix-multiplication} is suitable for smaller $q$; while \textit{FFT} is faster when $q$ is close to $T$ such as covariance estimation in Section \ref{section_noise_identification}.
The consistency of both two methods are ensured since they use the DFT matrix as regressors, which is unitary.

\subsection{Subspace Identification} \label{section_sim_identification}
Referring to the cISSIM \cite{cISSIM_CHuang_2}, the order $n$ and the quadruple of the system parameters $(\mat[A], \mat[B], \mat[C], \mat[D])$ can be obtained by subspace identification.
It uses $\mat[Y]^{\rm i}$ and $\mat[U]^{\rm i}$ as inputs, and is divided into two parts:

\subsubsection{Estimation of A, C and n} \label{section_sim_acn}
Considering the Sylvester equations \eqref{eq_ise_state}-\eqref{eq_ise_output}, $\mat[A]$, $\mat[C]$ and $n$ can be estimated in the similar way as in continuous-time domain \cite[Lemma 5]{cISSIM_CHuang_2}.
The consistency of the following Algorithm refers to \cite[Lemma 6]{cISSIM_CHuang_2}, where $\op[sqrt](\mat[M])$ stands for making square root onto matrix $\mat[M]$ element-wisely.
Algorithm \ref{algorithm_sim_acn} shows the results.

\begin{algorithm}[htb]
    \LinesNumbered \SetAlgoLined
    \caption{Estimate $\mat[A]$, $\mat[C]$ and $n$} \label{algorithm_sim_acn}
    \KwData{ $\hat{\mat[Y]}^{\rm i}$, $\hat{\mat[U]}^{\rm i}$, $\bar{n}$. }
    \KwResult{ $\hat{\mat[A]}$, $\hat{\mat[C]}$ and $\hat{n}$. }

    Build $\hat{\mathbcal{Y}}$ and $\hat{\mathbcal{U}}$:\\
    $\mathbcal{Y} = \begin{bmatrix*} \mat[Y]^{\rm i} \\ \mat[Y]^{\rm i}\mat[S]^{\rm i} \\ \vdots \\ \mat[Y]^{\rm i}(\mat[S]^{\rm i})^{\bar{n}-1} \end{bmatrix*}$,
    $\mathbcal{U} = \begin{bmatrix*} \mat[U]^{\rm i} \\ \mat[U]^{\rm i}\mat[S]^{\rm i} \\ \vdots \\ \mat[U]^{\rm i}(\mat[S]^{\rm i})^{\bar{n}-1} \end{bmatrix*}$\;

    QR decomposition: $\begin{bmatrix*} \hat{\mathbcal{U}}\tr & \hat{\mathbcal{Y}}\tr \end{bmatrix*} =
        \begin{bmatrix*} \hat{\mat[Q]}_1 & \hat{\mat[Q]}_2 \end{bmatrix*}
        \begin{bmatrix*} \hat{\mat[R]}_{1,1} & \hat{\mat[R]}_{1,2} \\ \mat[0] & \hat{\mat[R]}_{2,2} \end{bmatrix*}$\;
    Slice $\hat{\mat[R]}_{2,2} \in \R^{\bar{n}p\times\bar{n}p}$ and conduct SVD: $\hat{\mat[R]}_{2,2}\tr
        = \hat{\mat[W]} \hat{\mat[\Sigma]} \hat{\mat[Z]}\H
        = \begin{bmatrix*} \hat{\mat[W]}_1 & \hat{\mat[W]}_0 \end{bmatrix*} \begin{bmatrix*} \hat{\mat[\Sigma]}_1 & \mat[0] \\ \mat[0] & \hat{\mat[\Sigma]}_0 \end{bmatrix*} \hat{\mat[Z]}\H$\;
    Select the largest $\hat{n}$ singular values that forms $\hat{\mat[\Sigma]}_1$\;
    $\hat{\mat[O]} \leftarrow \hat{\mat[W]}_1 \; \op[sqrt](\hat{\mat[\Sigma]}_1)$\;
    $\hat{\mat[C]} \leftarrow \hat{\mat[O]}_{\textrm{the first $p$ rows}} $\;
    $\hat{\mat[A]} \leftarrow \left( \hat{\mat[O]}_{\textrm{the first $p(\bar{n}-1)$ rows}} \right)^\dagger \hat{\mat[O]}_{\textrm{the last $p(\bar{n}-1)$ rows}}$\;
    \Return $\hat{\mat[A]}$, $\hat{\mat[C]}$ and $\hat{n}$.
\end{algorithm}

\begin{remark}[order selection] \label{remark_order_selection}
        The system order $n$ is determined by Step $4$ in Algorithm \ref{algorithm_sim_acn}.
    Frequency-domain SIM \cite{SID_Overschee_1} has proven that non-zero singular values asymptotically converge into the system order.
        Thus, a rough estimation can be made through calculating the rank of $\hat{\mat[R]}_{2, 2}$ by SVD with an adjustable tolerance \cite{OrderSelection_Roy_1}.
        Readers can refer to literature \cite{SystemIdentification_Ljung_1} for more practical ways, such as Akaike's Information Criterion (AIC) and Akaike's Final Prediction Error (FPE).
    The detailed discussion is beyond the scope of this paper.
        Moreover, selecting the frequencies that have high amplitudes into $\mathbb{F}^{\rm i}$ can also help to increasing the signal noise ratios (SNR), and improve the estimation.
\end{remark}

\subsubsection{Estimation of B and D} \label{section_sim_bdx}
Similar to cISSIM \cite{cISSIM_CHuang_2}, $\mat[B]$ and $\mat[D]$ can be estimated through decomposing the Sylvester equations \eqref{eq_ise_state}-\eqref{eq_ise_output} into multivariate linear regression problem, but in a different decomposition:

$\mat[S]$ is diagonalizable because of its skew Hermitianity.
Assume that the first term of $\mathbb{F}^{\rm i}$ is null frequency: $\indk_1 = 0$, then the diagonalizing and diagonalized matrix of $\mat[S]$ are defined as follows such that $\widetilde{\mat[P]}^{-1}\mat[S]^{\rm i}\widetilde{\mat[P]} = \mat[\Omega]$:
\begin{align}
    \widetilde{\mat[P]} & = \op[blockdiag]\left( 1, \begin{bmatrix*} \im & -\im \\ 1 & 1 \end{bmatrix*}, \cdots, \begin{bmatrix*} \im & -\im \\ 1 & 1 \end{bmatrix*} \right) \label{eq_sim_diagonalize}, \\
    \mat[\Omega]        & = \op[diag](1, e^{-\im\omega\indk_2}, e^{\im\omega\indk_2}, \ldots, e^{-\im\omega\indk_q}, e^{\im\omega\indk_q}) \label{eq_sim_diagonal}.
\end{align}

Define the similar transformation as $\tilde{\mat[X]}^{\rm i} = \mat[X]^{\rm i} \widetilde{\mat[P]}$, $\tilde{\mat[U]}^{\rm i} = \mat[U]^{\rm i} \widetilde{\mat[P]}$, and $\tilde{\mat[Y]}^{\rm i} = \mat[Y]^{\rm i} \widetilde{\mat[P]}$.
For a specified frequency $\omega\indk_r$, the Sylvester equations can be decomposed as follows:
\begin{equation*}
    e^{\im\omega\indk_r} \tilde{\vec[x]}_{\indk_r} = \mat[A]\tilde{\vec[x]}_{\indk_r} + \mat[B]\tilde{\vec[u]}_{\indk_r}, \quad
    \tilde{\vec[y]}_{\indk_r} = \mat[C]\tilde{\vec[x]}_{\indk_r} + \mat[D]\tilde{\vec[u]}_{\indk_r},
\end{equation*}
where $\tilde{\mat[X]}^{\rm i} = \begin{bmatrix*} \tilde{\vec[x]}_0 & \overline{\tilde{\vec[x]}_{\indk_2}} & \tilde{\vec[x]}_{\indk_2} & \cdots & \overline{\tilde{\vec[x]}_{\indk_q}} & \tilde{\vec[x]}_{\indk_q} \end{bmatrix*}$, and so do the $\tilde{\vec[u]}_{\indk_r}$ and $\tilde{\vec[y]}_{\indk_r}$.
Thus, the Sylvester equations can be represented by a transfer function:
\begin{equation} \label{eq_sim_tf_model}
    \vec[y]_{\indk_r}^{\rm i} = \mat[G]_{\indk_r}^{\rm i} \vec[u]_{\indk_r}^{\rm i}, \quad \mat[G]_{\indk_r}^{\rm i} = \mat[C](e^{\im\omega\indk_r}\mat[I]_n - \mat[A])^{-1}\mat[B] + \mat[D].
\end{equation}
The above equation can be vectorized into linear regression model regarding that $\op[vec](\mat[A]\mat[X]\mat[B]) = (\mat[B]\tr\otimes\mat[A])\op[vec](\mat[X])$:
\begin{equation*} \begin{aligned}
        \op[vec](\tilde{\vec[y]}_{\indk_r}) & = \left[(\tilde{\vec[u]}_{\indk_r}^{\rm i})\tr \otimes \mat[C](e^{\im\omega\indk_r}\mat[I]_n - \hat{\mat[A]})^{-1}\right]\op[vec](\mat[B]) \\
                                            & + \left[(\tilde{\vec[u]}_{\indk_r}^{\rm i})\tr \otimes \mat[I]_p\right]\op[vec](\mat[D]).
\end{aligned} \end{equation*}

The whole method is shown by Algorithm \ref{algorithm_sim_bdx}.

\begin{algorithm}[htb]
    \LinesNumbered \SetAlgoLined
    \caption{Estimate $\mat[B]$ and $\mat[D]$ and $\mat[X]^{\rm i}$} \label{algorithm_sim_bdx}
    \KwData{ $\mat[S]^{\rm i}$, $\hat{\mat[Y]}^{\rm i}$, $\hat{\mat[U]}^{\rm i}$, $\hat{\mat[A]}$, and $\hat{\mat[C]}$. }
    \KwResult{ $\hat{\mat[B]}$ and $\hat{\mat[D]}$. }

    $\hat{\tilde{\mat[X]}}^{\rm i} \, \leftarrow \hat{\mat[X]}^{\rm i} \widetilde{\mat[P]}$, $\hat{\tilde{\mat[U]}}^{\rm i} \leftarrow \hat{\mat[U]}^{\rm i} \widetilde{\mat[P]}$, $\hat{\tilde{\mat[Y]}}^{\rm i} \leftarrow \hat{\mat[Y]}^{\rm i} \widetilde{\mat[P]}$\;
    $\hat{\mat[M]}_{\indk_r} \leftarrow (\hat{\tilde{\vec[u]}}_{\indk_r}^{\rm i})\tr \otimes \hat{\mat[C]}(e^{\im\omega\indk_r}\mat[I]_{\hat{n}} - \hat{\mat[A]})^{-1}$, $r = 1,2,\ldots,q$\;
    $\begin{bmatrix*} \op[vec](\hat{\mat[B]}) \\ \op[vec](\hat{\mat[D]}) \end{bmatrix*} \leftarrow
        {\begin{bmatrix*}
            \hat{\mat[M]}_{\indk_1} & (\hat{\tilde{\vec[u]}}_{\indk_1}^{\rm i})\tr \otimes \mat[I]_p \\
            \vdots & \vdots \\
            \hat{\mat[M]}_{\indk_q} & (\hat{\tilde{\vec[u]}}_{\indk_q}^{\rm i})\tr \otimes \mat[I]_p
        \end{bmatrix*}}^\dagger \op[vec]\hat{\mat[Y]}^{\rm i}$\;
        \Return $\hat{\mat[B]}$ and $\hat{\mat[D]}$.
\end{algorithm}

\subsubsection{Persistent-excitation Condition} \label{section_sim_persistent}
The consistency condition of system parameter estimation is similar to cISSIM:

\begin{assumption}[{{\cite[Assumption 5]{cISSIM_CHuang_2}}}] \label{assume_pe_deterministic}
    Define $\mat[V]$ as follows:
    \begin{equation*}
        \mat[V] = \begin{bmatrix}
            1 & 1 & \cdots & 1 \\
            e^{\im\omega} & e^{\im2\omega} & \cdots & e^{\im\lfloor{{T-1}\over{2}}\rfloor\omega} \\
            \ldots & \ldots & \ddots & \ldots \\
            e^{\im(n+\bar{n}-1)\omega} & e^{\im2(n+\bar{n}-1)\omega} & \cdots & e^{\im\lfloor{{T-1}\over{2}}\rfloor(n+\bar{n}-1)\omega}
        \end{bmatrix}.
    \end{equation*}
    Then the excitation signal should satisfy:
    \begin{equation} \label{eq_deterministic_pecondtion}
        \op[rank](\mat[U]^{\rm i}\odot\mat[V]) = m(n + \bar{n}).
    \end{equation}
\end{assumption}
This paper uses $q$ to denote the number of frequencies that are excited by the noise-free excitation $\vec[u]_k$, and used in $\mathbb{F}^{\rm i}$.
Literature \cite{cISSIM_CHuang_2} shows that condition \eqref{eq_deterministic_pecondtion} satisfies with probability $1$ if $q$ satisfies that $q\geq m(n + \bar{n})$.

\subsection{Covariance Estimation} \label{section_noise_identification}
The covariance estimation is based on ALS and its key step is to separate the noises and excitations from the measured data.
Since the excitation $\vec[u]_k$ can be arbitrary discretely periodic signal, the full excitation frequency set $\mathbb{F}^{\rm f}$ is used:

\begin{assumption} \label{assume_full_identification}
    $\mathbb{F}^{\rm i} = \mathbb{F}^{\rm f}$.
\end{assumption}

\subsubsection{Correlation Matrices Estimation} \label{section_noise_correlation}
The noise samples $\{\vec[r]_k\}$ and $\{\vec[\tau]_k\}$ defined in equation \eqref{eq_iseproof_definition} can be estimated consistently since the transient processes are dying out exponentially in $\vec[r]_k$ and $\vec[\tau]_k$.
Given $\mathbb{F}^{\rm i} = \mathbb{F}^{\rm f}$, the correlation matrices can thus be estimated asymptotically and consistently in the time domain as follows:
\begin{align}
    \hat{\vec[r]}_k & = \vec[y]_k^{\rm m} - \hat{\widetilde{\mat[Y]}}^{\rm i} \vec[v]_k^{\rm i}, \quad
        \hat{\mat[\Xi]}_{rr}^\indl = \frac{\sum_{k=1}^{N-\indl}{\hat{\vec[r]}_{k+\indl}\hat{\vec[r]}\tr_k}}{N-\indl}; \label{eq_als_correlation_rr} \\
    \hat{\vec[\tau]}_k & = \vec[u]_k^{\rm m} - \hat{\widetilde{\mat[U]}}^{\rm i} \vec[v]_k^{\rm i}, \quad
        \hat{\mat[\Xi]}_{tt} = \frac{\sum_{k=1}^{N-1}{\hat{\vec[\tau]}_k\hat{\vec[\tau]}\tr_k}}{N}. \label{eq_als_correlation_tt}
\end{align}
where $\indl = 0,\ldots,M-1$, and $M$ is the total number of $\hat{\mat[\Xi]}_{rr}^\indl$ estimated.
Section \ref{section_noise_persistent} discusses how to choose $M$.
Moreover, they can be calculated faster within frequency-domain considering the Wiener-Khinchin theorem:
\begin{theorem}[{{\cite{TimeSeries_Wiener_1}}}] \label{theorem_wiener_khinchin}
    Given a stochastic process $x(k)$ which is wide-sense stationary, then its auto-correlation function $r_{xy}(\tau) = \E[x(k+\tau) \overline{y(k)}]$ has a spectral decomposition given by the power spectral density $S_{xy}(f) = \frac{1}{2\pi}\sum_{\tau=-\infty}^\infty{r_{xy}(\tau)e^{-\im f\tau}}$:
    \begin{equation*}
        r_{xy}(\tau) = \int_{-\pi}^\pi{e^{\im\tau\omega}S_{xy}(f)\mathrm{d}f}.
    \end{equation*}
\end{theorem}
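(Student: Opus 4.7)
The plan is to recognize that the claim is simply the Fourier inversion formula applied to the discrete-time Fourier transform pair $(r_{xy}, S_{xy})$, so the proof reduces to justifying the inversion in a setting where the defining series for $S_{xy}$ actually makes sense. I would work under the standing hypothesis that $r_{xy}\in\ell^1(\Z)$, which is precisely the case of interest here: by Assumption \ref{assume_stability} and Remark \ref{remark_ise_stationary_process}, the disturbance system \eqref{eq_disturbance_state}--\eqref{eq_disturbance_output} is a stable ARMA process, so its autocorrelations decay geometrically. Under this assumption, the series defining $S_{xy}(f)$ converges absolutely and uniformly in $f$, and $S_{xy}$ is continuous and bounded on $[-\pi,\pi]$.

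The computation itself is then routine. First I would substitute the definition of $S_{xy}(f)$ into the right-hand side of the claim, obtaining
\begin{equation*}
\int_{-\pi}^\pi e^{\im\tau f} S_{xy}(f)\,\mathrm{d}f
= \frac{1}{2\pi}\int_{-\pi}^\pi e^{\im\tau f} \sum_{\sigma=-\infty}^\infty r_{xy}(\sigma) e^{-\im f\sigma}\,\mathrm{d}f.
\end{equation*}
Next, I would interchange summation and integration via Fubini--Tonelli, whose hypotheses are satisfied by the $\ell^1$ assumption combined with boundedness of $|e^{\im f(\tau-\sigma)}|=1$ on the compact interval $[-\pi,\pi]$. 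Finally, the classical orthogonality relation $\frac{1}{2\pi}\int_{-\pi}^\pi e^{\im f(\tau-\sigma)}\,\mathrm{d}f = \delta_{\tau,\sigma}$ collapses the inner sum to the single surviving term $r_{xy}(\tau)$.

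The main obstacle, at least conceptually, is the case in which one wishes to drop the absolute summability assumption. For a general jointly wide-sense stationary process, $r_{xy}$ need only be a bounded positive-definite sequence (for the autocorrelation $r_{xx}$; the cross case is analogous), so the sum defining $S_{xy}$ may fail to converge pointwise. In that generality one must invoke the Herglotz representation theorem to obtain a spectral measure $\mathrm{d}F_{xy}$ on $[-\pi,\pi]$ and replace the integral in the statement by a Stieltjes integral against $\mathrm{d}F_{xy}$; the claimed identity then holds once this measure admits a density, which is taken as the definition of $S_{xy}$. Since the present paper only invokes the theorem for the geometrically mixing disturbances of \eqref{eq_disturbance_state}--\eqref{eq_disturbance_output}, the elementary $\ell^1$ route above suffices and no appeal to the general spectral theory of stationary sequences is needed.
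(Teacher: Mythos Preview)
Your argument is correct: under the $\ell^1$ hypothesis on $r_{xy}$, which holds for the ARMA disturbances of \eqref{eq_disturbance_state}--\eqref{eq_disturbance_output} by Assumption~\ref{assume_stability}, the Fubini--Tonelli interchange and the orthogonality relation $\frac{1}{2\pi}\int_{-\pi}^\pi e^{\im f(\tau-\sigma)}\,\mathrm{d}f=\delta_{\tau,\sigma}$ give the inversion identity directly, and your remark on the Herglotz representation for the general WSS case is apt.

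There is nothing to compare against, however: the paper does not supply a proof of Theorem~\ref{theorem_wiener_khinchin} at all. It is stated with a citation to the classical source and invoked only as a tool to justify that the frequency-domain computation in Algorithm~\ref{algorithm_correlation} recovers the time-domain correlations \eqref{eq_als_correlation_rr}--\eqref{eq_als_correlation_tt}. So your proposal goes beyond what the paper itself provides, and in that sense you have filled a gap rather than reproduced or diverged from an existing argument.
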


Denote $\hat{\mat[Y]}_{\rm fft}$ as the DFT of $\{\vec[y]_k^{\rm m}\}$, and $\hat{\mat[U]}_{\rm fft}$ as the DFT of $\{\vec[u]_k^{\rm m}\}$.
Since Algorithm \ref{algorithm_isim_fft} generates them ($\hat{\mathbcal{W}}_{\rm fft} = \begin{bmatrix*} \hat{\mat[Y]}_{\rm fft} & \hat{\mat[U]}_{\rm fft} \end{bmatrix*}$), the correlation matrices can be estimated directly as Algorithm \ref{algorithm_correlation} shows.

Compared with the classical ALS methods, Algorithm \ref{algorithm_correlation} does not need a Kalman state observer, and uses Fourier transformation to reduce the computational complexity from $\mathcal{O}(MN^2)$ to $\mathcal{O}(N\log N)$.

\begin{algorithm}[htb]
    \LinesNumbered \SetAlgoLined
    \caption{Estimate Correlation Matrices} \label{algorithm_correlation}
    \KwData{ $\hat{\mat[Y]}_{\rm fft}$, $\hat{\mat[U]}_{\rm fft}$, $\mat[Y]_N$, $\mat[U]_N$. ($\mat[Z]_N = \begin{bmatrix*} \mat[Y]_N & \mat[U]_N \end{bmatrix*}$) }
    \KwResult{ $\hat{\mat[\Xi]}_{rr}^\indl$, $\hat{\mat[\Xi]}_{tt}^\indl$, $\indl = 0,1,\ldots,M-1$. }

    \tcc{Do FFT and IFFT column-wisely}
    \tcc{Interpolate zeros into frequency axis}
    $\hat{\mathbcal{Y}}_{\rm itp} = \mat[0]\in\R^{\lfloor{{N}\over{T}}\rfloor T \times p}$, $\hat{\mathbcal{U}}_{\rm itp} = \mat[0]\in\R^{\lfloor{{N}\over{T}}\rfloor T \times m}$\;
    $\hat{\mat[Y]}_{\rm fft} \leftarrow \left(\lfloor{{N}\over{T}}\rfloor T\right) \times \hat{\mat[Y]}_{\rm fft}$,
    $\hat{\mat[U]}_{\rm fft} \leftarrow \left(\lfloor{{N}\over{T}}\rfloor T\right) \times \hat{\mat[U]}_{\rm fft}$\;
    \For{$k = 1,\ldots,T$}{
        $(\hat{\mathbcal{Y}}_{\rm itp})_{\textrm{the $k\lfloor{{N}\over{T}}\rfloor^\text{th}$ row}} \leftarrow (\hat{\mat[Y]}_{\rm fft})_{\textrm{the $k^\text{th}$ row}}$\;
        $(\hat{\mathbcal{U}}_{\rm itp})_{\textrm{the $k\lfloor{{N}\over{T}}\rfloor^\text{th}$ row}} \leftarrow (\hat{\mat[U]}_{\rm fft})_{\textrm{the $k^\text{th}$ row}}$\;
    }

    \tcc{Calculate the spectral density}
    $\hat{\mathbcal{R}}_{\rm itp} \leftarrow \op[fft]\left( \mat[Y]_N \right) - \hat{\mathbcal{Y}}_{\rm itp}$,
    $\hat{\mathbcal{T}}_{\rm itp} \leftarrow \op[fft]\left( \mat[U]_N \right) - \hat{\mathbcal{U}}_{\rm itp}$\;

    \tcc{Calculate the correlation matrices}
    \For{$i = 1,\ldots,p$, $j = 1,\ldots,p$, $i \leq j$}{
        $\hat{\mathbcal{r}}_{ij} \leftarrow \op[ifft]\left[ (\hat{\mathbcal{R}}_{\rm itp})_{\textrm{the $i^\text{th}$ column}} .\!* \overline{(\hat{\mathbcal{R}}_{\rm itp})_{\textrm{the $j^\text{th}$ column}}} \right]$\;
    }
    \For{$\indl = 0,1,\ldots,M-1$}{
        \For{$i = 1,\ldots,p$, $j = 1,\ldots,p$}{
            $(\hat{\mat[\Xi]}_{rr}^\indl)_{\textrm{the element at $(i,j)$}} \leftarrow \left(\lfloor{{N}\over{T}}\rfloor T - \indl\right)^{-1} \times (\hat{\mathbcal{r}}_{ij})_{\textrm{the $\indl^\text{th}$ element}}$\;
        }
    }
    $\hat{\mat[\Xi]}_{tt} \leftarrow \left(\lfloor{{N}\over{T}}\rfloor T\right)^{-2} \times (\hat{\mathbcal{T}}_{\rm itp})\tr .\!* \overline{\hat{\mathbcal{T}}_{\rm itp}}$\;

    \Return $\hat{\mat[\Xi]}_{rr}^\indl$, $\hat{\mat[\Xi]}_{tt}^\indl$, $\indl = 0,1,\ldots,M-1$.
\end{algorithm}

\subsubsection{Covariance Matrix Estimation} \label{section_noise_covariance}
Equations \eqref{eq_correlation_definition}-\eqref{eq_disturbance_output} in Remark \ref{remark_ise_stationary_process} shows the relationship between covariance matrices and correlation matrices.
As the auto-correlation of $\vec[z]_k$ is unavailable, the first step is to estimate $\mat[\Xi]_{zz}$.
Vectorizing \eqref{eq_correlation_rr} gives:
\begin{equation} \label{eq_als_vectorized_rr}
    \op[vec](\mat[\Xi]_{rr}^\indl) = (\mat[C]\otimes\mat[C]\mat[A]^\indl)\op[vec](\mat[\Xi]_{zz}), \quad \indl = 1,2,\ldots.
\end{equation}

Define the following notations:
\begin{equation*} \begin{aligned}
        \mathbcal{M} & = \mat[C] \otimes \begin{bmatrix*} (\mat[C]\mat[A])\tr & (\mat[C]\mat[A]^2)\tr \cdots & (\mat[C]\mat[A]^{M-1})\tr \end{bmatrix*}\tr,             \\
        \mathbcal{p} & = \begin{bmatrix*} \op[vec](\mat[\Xi]_{rr}^1)\tr & \op[vec](\mat[\Xi]_{rr}^2)\tr & \cdots & \op[vec](\mat[\Xi]_{rr}^{M-1})\tr \end{bmatrix*}\tr.
    \end{aligned} \end{equation*}
$\mat[\Xi]_{zz}$ can thus be estimated with the following linear regression model:
\begin{equation} \label{eq_als_ls_model}
    \mathbcal{M} \op[vec](\mat[\Xi]_{zz}) = \mathbcal{p}.
\end{equation}

Besides using the least-squares method, semidefinite programming (SDP) can be used to ensure the estimated covariance matrices to be semi-positive definite:
\begin{equation} \begin{array}{@{}c@{}l} \label{eq_als_sdp_model}
    \min\limits_{\mat[\Xi]_{zz}} \quad & {\left\Vert \mathbcal{M} \op[vec](\mat[\Xi]_{zz}) - \mathbcal{p} \right\Vert}_2, \\
    \!\!\! s.t. & \mat[\Xi] \succeq \mat[0], \mat[\Xi] - \mat[A] \mat[\Xi] \mat[A]\tr \succeq \mat[0], \mat[\Xi]_{rr}^0 - \mat[C] \mat[\Xi] \mat[C]\tr \succeq \mat[0],
\end{array} \end{equation}
The value of $\mat[\Sigma]_{\omega\omega}$, $\mat[\Sigma]_{\nu\nu}$ and $\mat[\Sigma]_{\tau\tau}$ can then be calculated by definitions \eqref{eq_correlation_definition}-\eqref{eq_correlation_rr}.
In summary, the method is shown as Algorithm \ref{algorithm_cov} when $\mathbb{F}^{\rm i} = \mathbb{F}^{\rm f}$.

\begin{algorithm}[htb]
    \LinesNumbered \SetAlgoLined
    \caption{Estimate (Nominal) Covariance} \label{algorithm_cov}
    \KwData{ $\hat{\mat[\Xi]}_{rr}^\indl$, $\hat{\mat[\Xi]}_{tt}$ by \eqref{eq_als_correlation_tt}, $\indl = 0,\ldots,M-1$. }
    \KwResult{ $\hat{\mat[\Sigma]}_{\omega\omega}$, $\hat{\mat[\Sigma]}_{\nu\nu}$ and $\hat{\mat[\Sigma]}_{\tau\tau}$. }

    Select $\lceil{{n^2}\over{p^2}}\rceil+1 \leq M \ll N$\;
    Solve SDP \eqref{eq_als_sdp_model} to obtain a feasible solution $\hat{\mat[\Xi]}_{zz}$\;
    $\hat{\mat[\Sigma]}_{\omega\omega} \leftarrow \mat[\Xi]_{zz} - \mat[A]\mat[\Xi]_{zz}\mat[A]\tr$,
    $\mat[\Sigma]_{\nu\nu} \leftarrow \mat[\Xi]_{rr}^0 - \mat[C]\mat[\Xi]_{zz}\mat[C]\tr$,
    and $\mat[\Sigma]_{\tau\tau} \leftarrow \mat[\Xi]_{tt}$\;

    \Return $\hat{\mat[\Sigma]}_{\omega\omega}$, $\mat[\Sigma]_{\nu\nu}$ and $\mat[\Sigma]_{\tau\tau}$.
\end{algorithm}

\subsubsection{Persistent-excitation Condition} \label{section_noise_persistent}
As for the correlation estimation, both the time and frequency-domain estimations are unbiased and asymptotically consistent as $N\to\infty$ theoretically.
In practice, the estimation could be more robust if the first several periods of samples are eliminated to relieve the impact of transient.

Covariance estimation needs to pay more attention:
Define the strong detectability condition firstly:

\begin{assumption}[Strong detectability] \label{assume_strong_detectable}
    $\mat[A]$ is non-singular, and $\mat[C]$ is of full column rank.
\end{assumption}

The consistency of covariance needs the regression \eqref{eq_als_ls_model} to be overdetermined.
Literature \cite{ALS_Zagrobelny_1} gives the sufficient and necessary condition of the identifiability of $\mat[\Xi]_{zz}$:

\begin{lemma}[{{\cite[Theorem 1]{ALS_Zagrobelny_1}}}] \label{lemma_covariance_identifiability}
    Given the DT ARMA process \eqref{eq_ise_state}-\eqref{eq_ise_output}, and assume $\mat[A]$ is Schur and non-singular.
    Then the covariance $\mat[\Sigma]_{\omega\omega}$ and $\mat[\Sigma]_{\nu\nu}$ can be estimated consistently by the steady-state auto-correlation $\mat[\Xi]_{rr}^\indl$, $\indl = 0,1,\ldots$ if and only if $(\mat[C], \mat[A])$ satisfies Assumption \ref{assume_strong_detectable}.
\end{lemma}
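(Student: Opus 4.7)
The plan is to reduce the statistical consistency question to a linear-algebraic identifiability question for the coefficient matrix $\mathbcal{M}$ of the ALS system \eqref{eq_als_ls_model}, and then characterize its column rank via Kronecker-product identities.

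First, I would note that the sample auto-correlations $\hat{\mat[\Xi]}_{rr}^\indl$ converge in probability to the true $\mat[\Xi]_{rr}^\indl$ as $N \to \infty$, by ergodicity of the asymptotically stationary disturbance ARMA process \eqref{eq_disturbance_state}--\eqref{eq_disturbance_output} under Assumption \ref{assume_stability} (Remark \ref{remark_ise_stationary_process}). Since $\mat[\Sigma]_{\omega\omega}$ and $\mat[\Sigma]_{\nu\nu}$ are explicit continuous functions of $(\mat[\Xi]_{zz}, \mat[\Xi]_{rr}^0)$ via \eqref{eq_correlation_zz}--\eqref{eq_correlation_rr}, the continuous mapping theorem reduces consistency of $(\hat{\mat[\Sigma]}_{\omega\omega}, \hat{\mat[\Sigma]}_{\nu\nu})$ to unique solvability of $\mathbcal{M}\op[vec](\mat[\Xi]_{zz}) = \mathbcal{p}$. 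Identifiability is therefore equivalent to $\mathbcal{M}$ having full column rank $n^2$.

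Second, I would apply the identity $\op[rank](\mat[A] \otimes \mat[B]) = \op[rank](\mat[A])\,\op[rank](\mat[B])$, which gives $\op[rank](\mathbcal{M}) = \op[rank](\mat[C])\cdot\op[rank](\mat[O]_A)$, where $\mat[O]_A = \begin{bmatrix} (\mat[C]\mat[A])\tr & \cdots & (\mat[C]\mat[A]^{M-1})\tr \end{bmatrix}\tr$. Full column rank $n^2$ forces both factors to equal $n$. Since $\mat[O]_A = \mat[O]_{\rm std}\,\mat[A]$ with $\mat[O]_{\rm std}$ the standard observability matrix, non-singularity of $\mat[A]$ yields $\op[rank](\mat[O]_A) = \op[rank](\mat[O]_{\rm std}) = n$ for $M \geq n+1$, where the last equality uses observability (itself implied by $\mat[C]$ having column rank $n$). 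Thus the two halves of strong detectability (Assumption \ref{assume_strong_detectable}) are exactly the required rank conditions, which delivers sufficiency.

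For the necessity, if $\mat[C]$ has column rank less than $n$, any nonzero $\vec[c] \in \ker(\mat[C])$ produces $\vec[c]\vec[c]\tr \in \ker(\mathbcal{M})$ via the Kronecker identity; analogously $\vec[a]\vec[a]\tr \in \ker(\mathbcal{M})$ for $\vec[a] \in \ker(\mat[A])$ when $\mat[A]$ is singular. Perturbing the true $\mat[\Xi]_{zz}$ by a small multiple of such a direction yields a second pre-image with identical auto-correlations, destroying identifiability. The main obstacle will be ensuring the perturbed $\mat[\Xi]_{zz}'$ stays in the PSD cone \emph{and} that the induced $\mat[\Sigma]_{\omega\omega}' = \mat[\Xi]_{zz}' - \mat[A]\mat[\Xi]_{zz}'\mat[A]\tr$ and $\mat[\Sigma]_{\nu\nu}' = \mat[\Xi]_{rr}^0 - \mat[C]\mat[\Xi]_{zz}'\mat[C]\tr$ remain PSD, so the alternative solution represents a \emph{bona fide} noise model. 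I would handle this by choosing a nominal scenario in which $\mat[\Xi]_{zz}$, $\mat[\Sigma]_{\omega\omega}$, and $\mat[\Sigma]_{\nu\nu}$ are all strictly positive definite and invoking openness of the PSD cone at interior points, so that sufficiently small perturbations along the $\ker(\mathbcal{M})$ directions remain feasible covariances.
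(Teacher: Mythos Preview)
The paper does not prove this lemma at all: it is simply quoted from \cite[Theorem~1]{ALS_Zagrobelny_1} with no accompanying argument. There is therefore no ``paper's own proof'' to compare against, and your proposal is effectively a reconstruction of the cited result.

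On its own merits, your argument is sound. The reduction to injectivity of $\mathbcal{M}$ is the right move, and the Kronecker rank identity $\op[rank](\mat[C]\otimes\mat[O]_A)=\op[rank](\mat[C])\cdot\op[rank](\mat[O]_A)$ together with the factorization $\mat[O]_A=\mat[O]_{\rm std}\mat[A]$ cleanly delivers the sufficiency. Your necessity construction via the rank-one symmetric perturbations $\vec[c]\vec[c]\tr$ and $\vec[a]\vec[a]\tr$ is correct, and the interior-point trick for PSD feasibility is the standard way to close that gap. Two minor remarks: (i) the lemma hypothesis already assumes $\mat[A]$ non-singular, so the $\vec[a]\in\ker(\mat[A])$ branch of your necessity argument is technically superfluous here---the effective ``if and only if'' content is just the full-column-rank condition on $\mat[C]$; (ii) in the $\vec[a]\in\ker(\mat[A])$ branch you should also track the induced change in $\mat[\Sigma]_{\nu\nu}$, since $\mat[C]\vec[a]$ need not vanish and the $\indl=0$ equation \eqref{eq_correlation_rr} then forces $\mat[\Sigma]_{\nu\nu}'=\mat[\Sigma]_{\nu\nu}-\epsilon(\mat[C]\vec[a])(\mat[C]\vec[a])\tr$ to keep $\mat[\Xi]_{rr}^0$ fixed. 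This does not break the argument---the pair $(\mat[\Sigma]_{\omega\omega}',\mat[\Sigma]_{\nu\nu}')$ is still distinct and, for small $\epsilon$ at an interior nominal, still PSD---but it should be stated explicitly.
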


It can be concluded that the covariance consistency is fulfilled if Assumption \ref{assume_strong_detectable} is satisfied and choose $M \geq \lceil{{n^2}\over{p^2}}\rceil+1$.
It is also reasonable to select $M \ll N$ from the perspective of computational efficiency and robustness.
A practical example is to take $M = N^{{1}\over{3}}$.

The above discussion can be summarized into the following:

\begin{assumption} \label{assume_pe_stochastic}
    $\lceil{{n^2}\over{p^2}}\rceil+1 \leq M \ll N$.
\end{assumption}

\subsection{Algorithms Summary} \label{section_algorithm_conclusion}
This section gives $7$ algorithms that are proposed for different aims.
Figure \ref{fig_diagram_algorithm} summarizes them and gives the general workflows in practice.

\begin{figure}[htb]
    \centering \includegraphics[width=0.85\hsize]{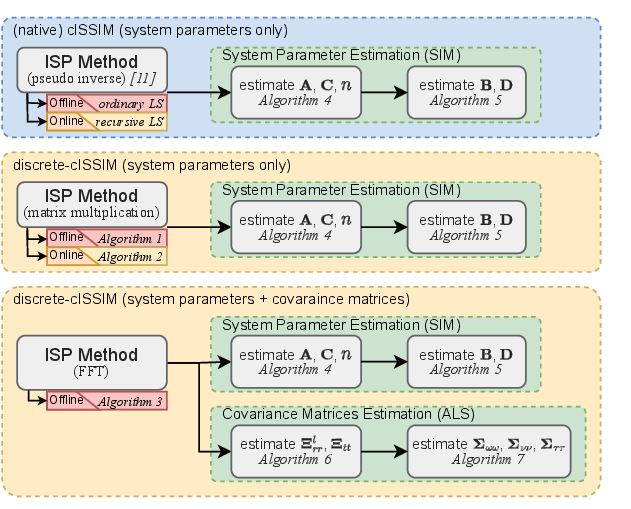}
    \caption{Algorithm Workflow} \label{fig_diagram_algorithm}
\end{figure}

\begin{remark}[computational complexity] \label{remark_complexity}
        The following analysis uses the matrix operation complexity defined in \cite{Numerical_Trefethen_1}, and only the implementations by definition are considered.
    For example, the complexity of the matrix multiplication is regarded as $\mathcal{O}(n^3)$ by definition, not $\mathcal{O}(n^{2.81})$ by Strassen's method \cite{MatrixMultiplication_Strassen_1} or less.
    Assume that there are $N$ samples in the original data, $q$ frequencies used for frequency-based methods, $\bar{n}$ states as the upper bound of system order.
    Other parameters are omitted for brevity.
        Table \ref{table_complexity} shows the computational complexity of discrete-cISSIM and other subspace methods for comparison.

    \begin{table}[htb]
        \centering \caption{Computational Complexity} \label{table_complexity}
        \footnotesize \begin{threeparttable} \begin{tabular}{c|c|l} \hline
            Task & Method & \makecell{Computational \\ Complexity} \\ \hline
            \multirow{5}{*}{\makecell{\scriptsize System Parameters}}
                & cISSIM              & $\mathcal{O}(Nq^2\+q^2\bar{n}\+\bar{n}^3)$ \\ \cline{2-3}
                & \multirow{2}{*}{d-ISIM\tnote{1}} & \multirow{2}{*}{\makecell[l]{$\mathcal{O}(Nq\+q^2\bar{n}\+\bar{n}^3)$ or \\ $\mathcal{O}(N\log T\+q^2\bar{n}\+\bar{n}^3)$}} \\
                & & \\ \cline{2-3}
                & SIM (time)\tnote{2} & $\mathcal{O}(N\bar{n}^2\+\bar{n}^3)$ \\ \cline{2-3}
                & SIM (freq)\tnote{3} & $\mathcal{O}(N\log T\+q^2\bar{n}\+\bar{n}^3)$ \\ \hline
            \multirow{2}{*}{\makecell{\scriptsize System Parameters \& \\ \scriptsize Correlation Matrices}}
                & d-ISIM              & $\mathcal{O}(N\log N\+q^2\bar{n}\+\bar{n}^3)$ \\ \cline{2-3}
                & SIM (time)          & $\mathcal{O}(N\bar{n}^2\+\bar{n}^3)$ \\ \hline
        \end{tabular} \end{threeparttable}
        \begin{tablenotes}
            \footnotesize
            \item{1} refers to the discrete-cISSIM proposed in this paper;
            \item{2} refers to the time-domain SIM \cite{SID_Overschee_2};
            \item{3} refers to the frequency-domain SIM \cite{SysIden_FreqSubspace_McKelvey_1}.
        \end{tablenotes}
    \end{table}

        For the estimation of system parameters only, discrete-cISSIM is the fastest among the $4$ methods.
    Since the covariance matrix estimation has not been discussed before, only the correlation matrices estimation is compared.
    The cISSIM and frequency-domain SIM are invalid in this case, and the discrete-cISSIM have a fair performance compared with time-domain SIM (the relative magnitude of $\log N$ and $\bar{n}^2$).
\end{remark}

\section{Consistency Analysis} \label{section_consistency}
This section introduces the consistency analyses.

\subsection{Consistency of System Parameters Estimation} \label{section_consistency_deterministic}
$\hat{\mat[Y]}^{\rm i}$ and $\hat{\mat[U]}^{\rm i}$ are estimated by simplified least-squares in Section \ref{section_isim_identification}.
The following lemma proves that the estimators are all asymptotically unbiased and consistent as $N\to\infty$:

\begin{lemma} \label{lemma_isim_consistency}
    Suppose Assumptions \ref{assume_stability}-\ref{assume_discrete_periodic} holds, then the estimations generated by each of the Algorithms \ref{algorithm_isim_ols}, \ref{algorithm_isim_rls} and \ref{algorithm_isim_fft} satisfy that $\lim_{N\to\infty}{\hat{\mat[Y]}_N^{\rm i}} = \mat[Y]^{\rm i}$ and $\lim_{N\to\infty}{\hat{\mat[U]}_N^{\rm i}} = \mat[U]^{\rm i}$ with probability $1$.
\end{lemma}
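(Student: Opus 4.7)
The plan is to reduce all three algorithms to a single closed form and then establish almost-sure convergence of the residual term by a law-of-large-numbers argument combined with the orthogonality structure from Corollary \ref{corollary_orthogonal}.

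First I would observe that Algorithm \ref{algorithm_isim_ols} is literally the OLS estimator $\hat{\mat[R]}^{\rm i} = (\mat[F]_N^{\rm i})^\dagger \mat[Z]_N$ evaluated at $N=kT$, because when Corollary \ref{corollary_orthogonal} applies the pseudo-inverse reduces to $\tfrac{2}{kT}(\mat[F]_N^{\rm i})\tr$. Algorithm \ref{algorithm_isim_rls} is simply a period-by-period running average of the per-period OLS estimates $\tfrac{2}{T}(\mat[F]_T^{\rm i})\tr(\mat[Z]_N)_{k\text{th block}}$, which equals the same batch OLS at every output instant $N=kT$. Algorithm \ref{algorithm_isim_fft} returns the DFT coefficients of the averaged per-period samples, which are exactly the inner products of $\mat[Z]_N$ with the Fourier basis vectors and hence again coincide with the OLS estimate after the sine/cosine unpacking steps. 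Thus it suffices to prove the claim for the common closed-form estimator.

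Substituting the regression model \eqref{eq_isim_ols_model} into this closed form gives
\begin{equation*}
\hat{\mat[R]}^{\rm i} = \mat[R]^{\rm i} + \tfrac{2}{kT}(\mat[F]_N^{\rm i})\tr \mat[F]_N^{\rm c}\mat[R]^{\rm c} + \tfrac{2}{kT}(\mat[F]_N^{\rm i})\tr \mat[E]_N .
\end{equation*}
The cross term vanishes exactly by the second identity in Corollary \ref{corollary_orthogonal}, so only the noise term $\tfrac{2}{kT}(\mat[F]_N^{\rm i})\tr \mat[E]_N$ must be shown to tend to $\mat[0]$ almost surely. Every entry of this matrix is a scalar sum of the form $\tfrac{1}{N}\sum_{k=0}^{N-1}\phi_k \varepsilon_k$, where $\phi_k\in\{1,\sin(k\omega\indk),\cos(k\omega\indk)\}$ is uniformly bounded and $\varepsilon_k$ is a scalar entry of either $\vec[\tau]_k$ or $\vec[r]_k$.

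For the input-noise rows the $\varepsilon_k$ are i.i.d.\ zero-mean with finite variance by \eqref{eq_problem_covariance}, so Kolmogorov's SLLN applied to the bounded-weight sequence $\phi_k\varepsilon_k$ gives a.s.\ convergence to zero. For the output-noise rows $\vec[r]_k$ is the output of the stable ARMA disturbance system \eqref{eq_disturbance_state}--\eqref{eq_disturbance_init} driven by $\vec[\omega]_k,\vec[\nu]_k$; by Assumption \ref{assume_stability} and Remark \ref{remark_ise_stationary_process} it is asymptotically wide-sense stationary with exponentially decaying autocovariance (hence summable), and the transient contribution from $\vec[z]_0$ decays geometrically and is therefore negligible in the Cesàro average. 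Consequently $\tfrac{1}{N}\sum_k \phi_k \varepsilon_k \to \E[\phi_k \varepsilon_k]=0$ almost surely by an ergodic-theorem / Kronecker-lemma argument for linearly filtered white noise with bounded deterministic weights.

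The main obstacle I anticipate is the output-noise term, since $\vec[r]_k$ is neither independent nor stationary from time zero; one has to separate the deterministic geometrically-decaying transient generated by $\vec[z]_0$ from the stationary tail, and then justify interchanging the trigonometric weighting with the ergodic average. All other steps---reducing the three algorithms to one estimator, killing the $\mat[F]^{\rm c}$ cross term via Corollary \ref{corollary_orthogonal}, and handling the i.i.d.\ input noise $\vec[\tau]_k$---are mechanical. Combining the two noise-convergence statements entry-wise yields $\hat{\mat[R]}^{\rm i}\to\mat[R]^{\rm i}$ a.s., which by the partition $\mat[R]^{\rm i}=[(\mat[Y]^{\rm i})\tr\ (\mat[U]^{\rm i})\tr]$ is precisely the claim.
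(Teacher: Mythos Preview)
Your proposal is correct and follows essentially the same route as the paper: reduce the three algorithms to the common OLS closed form $\hat{\mat[R]}^{\rm i}=\tfrac{2}{kT}(\mat[F]_N^{\rm i})\tr\mat[Z]_N$, eliminate the complement-frequency term via Corollary~\ref{corollary_orthogonal}, separate the disturbance into a geometrically decaying transient and a stationary ARMA noise, and conclude almost-sure convergence of the remaining noise average. The only cosmetic difference is that the paper vectorizes the model and invokes a previously established lemma (Lemma~4 of \cite{cISSIM_CHuang_2}) for the ARMA noise term, whereas you argue entry-wise and sketch the SLLN/ergodic step directly; your version is slightly more self-contained but otherwise identical in substance.
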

\begin{proof} See \ref{appendix_lemma_isim_consistency}. \end{proof}

The consistency of parameter estimation referred to Section \ref{section_sim_identification} has been proven in cISSIM \cite{cISSIM_CHuang_2}, and Assumption \ref{assume_pe_deterministic} is the derived PE condition.
They are omitted for brevity.
To sum up, the estimations of system parameters are consistent:
\begin{theorem} \label{corollary_determined_problem}
    Suppose Assumptions \ref{assume_stability}-\ref{assume_discrete_periodic} and \ref{assume_pe_deterministic} hold, the Problem \ref{problem_deterministic_identification} is solved by the combination of one type of ISP methods (Algorithm \ref{algorithm_isim_ols}, \ref{algorithm_isim_rls} or \ref{algorithm_isim_fft}) with SIM (Algorithms \ref{algorithm_sim_acn} and \ref{algorithm_sim_bdx}).
\end{theorem}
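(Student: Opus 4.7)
The plan is to prove Theorem \ref{corollary_determined_problem} by a two-stage ``consistent inputs imply consistent outputs'' argument, chaining Lemma \ref{lemma_isim_consistency} with the SIM consistency results inherited from cISSIM \cite{cISSIM_CHuang_2}. The only place where randomness enters the pipeline is through the invariant-subspace coefficients, so once these have been shown to converge almost surely, the subsequent steps are deterministic algebraic manipulations whose continuity in their inputs only has to be verified in a neighbourhood of the true values.

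First I would invoke Lemma \ref{lemma_isim_consistency} (which needs exactly Assumptions \ref{assume_stability}--\ref{assume_discrete_periodic}) to conclude that, regardless of which of Algorithms \ref{algorithm_isim_ols}, \ref{algorithm_isim_rls} or \ref{algorithm_isim_fft} is chosen, $\hat{\mat[Y]}^{\rm i}_N\to\mat[Y]^{\rm i}$ and $\hat{\mat[U]}^{\rm i}_N\to\mat[U]^{\rm i}$ almost surely. These are then fed into Algorithm \ref{algorithm_sim_acn}, whose noise-free analogue exactly recovers $n$ and $(\mat[A],\mat[C])$ up to similarity: the stacked matrix $\mathbcal{Y}$ factorises as $\mathbcal{O}\mat[X]^{\rm i}$ with $\mathbcal{O}$ the $\bar n$-step extended observability matrix (rank $n$ by Assumption \ref{assume_minimum}), so after the oblique projection realised by the QR step, $\hat{\mat[R]}_{2,2}\tr$ has exactly $n$ nonzero singular values bounded away from zero; the gap-based selection in step 4 then asymptotically returns the true $n$, and the shift-invariance trick on $\hat{\mat[O]}$ yields $(\hat{\mat[A]},\hat{\mat[C]})$. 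This is the discrete-time translation of Lemma 6 in \cite{cISSIM_CHuang_2} and I would simply quote it.

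Next I would handle Algorithm \ref{algorithm_sim_bdx}. The diagonalisation $\widetilde{\mat[P]}^{-1}\mat[S]^{\rm i}\widetilde{\mat[P]} = \mat[\Omega]$ decouples the Sylvester equation frequency-by-frequency into the transfer-function form $\mat[G](e^{\im\omega\indk_r})=\mat[C](e^{\im\omega\indk_r}\mat[I]_n\-\mat[A])^{-1}\mat[B]+\mat[D]$; this is well posed because $\mat[A]$ is Schur (Assumption \ref{assume_stability}) while $\mat[\Omega]$ lies on the unit circle, so each $e^{\im\omega\indk_r}\mat[I]_n-\mat[A]$ is invertible. Stacking the vectorised equations and invoking the Khatri--Rao rank condition of Assumption \ref{assume_pe_deterministic} shows the regressor has full column rank $m(n+\bar n)$, so the pseudoinverse step returns the unique $(\mat[B],\mat[D])$ compatible with the recovered $(\hat{\mat[A]},\hat{\mat[C]})$. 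Continuity of QR, SVD (away from eigenvalue crossings), matrix inverse and pseudoinverse on a full-rank neighbourhood of the noise-free inputs then lifts the almost-sure convergence of $(\hat{\mat[Y]}^{\rm i}_N,\hat{\mat[U]}^{\rm i}_N)$ to almost-sure convergence of the output quadruple, completing the chain.

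The main obstacle is the non-continuity of the order-selection step. The SVD truncation in step 4 of Algorithm \ref{algorithm_sim_acn} is not continuous at rank-deficient inputs, so a clean statement requires a strict singular-value gap between positions $n$ and $n+1$ in the noise-free $\mat[R]_{2,2}\tr$, with the tail being exactly zero. I would handle this by showing that under Assumptions \ref{assume_minimum}, \ref{assume_order_upperbound} and \ref{assume_pe_deterministic} the noise-free $\mat[R]_{2,2}\tr$ has rank exactly $n$, and that for any prescribed $\varepsilon>0$ the event $\{\hat\sigma_{n+1}<\varepsilon<\hat\sigma_n\}$ holds eventually almost surely. This guarantees that the data-driven rank test asymptotically selects $\hat n=n$, after which the continuous-function argument above takes over and delivers the quadruple up to a common similarity transformation.
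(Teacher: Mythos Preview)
Your proposal is correct and follows essentially the same approach as the paper: invoke Lemma~\ref{lemma_isim_consistency} for almost-sure convergence of $\hat{\mat[Y]}^{\rm i},\hat{\mat[U]}^{\rm i}$, then defer to the SIM consistency results of \cite{cISSIM_CHuang_2} (the paper cites Theorem~1 there, which packages the Lemma~6 argument you sketch) for the second stage. Your write-up is more detailed than the paper's one-line proof---in particular the explicit discussion of the singular-value gap for order selection and the continuity of QR/SVD/pseudoinverse---but the logical skeleton is identical.
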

\begin{proof}
    Given Lemma \ref{lemma_isim_consistency}, the consistency of $\hat{n}$ and the system parameters $(\hat{\mat[A]}, \hat{\mat[B]}, \hat{\mat[C]}, \hat{\mat[D]})$ can be concluded using the same proof as that of Theorem 1 in \cite{cISSIM_CHuang_2}.
\end{proof}

\subsection{Consistency of covariance matrix estimation}
Under the stationary noise assumption, the estimation of auto-correlations are convergent in mean square when $N\to\infty$:

\begin{lemma} \label{lemma_correlation_consistency}
    Suppose Assumptions \ref{assume_stability} and \ref{assume_full_identification} hold, and the $\hat{\mat[Y]}^{\rm i}$ and $\hat{\mat[U]}^{\rm i}$ have been given.
    Then the estimated auto-correlation calculated in time domain (equations \eqref{eq_als_correlation_rr}-\eqref{eq_als_correlation_tt}) or frequency domain (Algorithm \ref{algorithm_correlation}) converges to the true value in mean square as $N\to\infty$:
    \begin{equation*} \begin{aligned}
        & \lim_{N\to\infty}{\E\left[\op[vec]\left( \hat{\mat[\Xi]}_{rr}^\indl \right) \op[vec]\left( \hat{\mat[\Xi]}_{rr}^\indl \right)\tr\right]} = 0, \indl = 0,1,\ldots,M-1, \\
        & \lim_{N\to\infty}{\E\left[\op[vec]\left( \hat{\mat[\Xi]}_{tt} \right) \op[vec]\left( \hat{\mat[\Xi]}_{tt} \right)\tr\right]} = 0.
    \end{aligned} \end{equation*}
\end{lemma}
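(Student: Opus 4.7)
The plan is to decompose the residual process, show that the finite-sample bias from the invariant-subspace estimation vanishes, and then invoke ergodic theory for the asymptotically stationary noise process. I will write the argument for the output correlation $\hat{\mat[\Xi]}_{rr}^\indl$; the case of $\hat{\mat[\Xi]}_{tt}$ is strictly easier because $\vec[\tau]_k$ is already i.i.d.

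First I would reduce to the noise-only case. Under Assumption \ref{assume_full_identification}, $\mathbb{F}^{\rm c}=\emptyset$, so Theorem \ref{theorem_ise} collapses to $\vec[y]_k^{\rm m}=\mat[Y]^{\rm i}\vec[v]_k^{\rm i}+\vec[r]_k$ and $\vec[u]_k^{\rm m}=\mat[U]^{\rm i}\vec[v]_k^{\rm i}+\vec[\tau]_k$. Substituting into \eqref{eq_als_correlation_rr}-\eqref{eq_als_correlation_tt} gives the clean split $\hat{\vec[r]}_k=\vec[r]_k+\mat[\Delta]_Y\vec[v]_k^{\rm i}$ and $\hat{\vec[\tau]}_k=\vec[\tau]_k+\mat[\Delta]_U\vec[v]_k^{\rm i}$, where $\mat[\Delta]_Y:=\mat[Y]^{\rm i}-\hat{\widetilde{\mat[Y]}}^{\rm i}$ and $\mat[\Delta]_U:=\mat[U]^{\rm i}-\hat{\widetilde{\mat[U]}}^{\rm i}$. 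Lemma \ref{lemma_isim_consistency} already supplies $\mat[\Delta]_Y,\mat[\Delta]_U\to\mat[0]$ a.s., and inspection of Algorithms \ref{algorithm_isim_ols}-\ref{algorithm_isim_fft} gives the rate $\E\|\mat[\Delta]_Y\|_F^2=\mathcal{O}(1/N)$ via the orthogonality in Corollary \ref{corollary_orthogonal}.

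Next I would expand $\hat{\vec[r]}_{k+\indl}\hat{\vec[r]}_k\tr$ into the true term $\vec[r]_{k+\indl}\vec[r]_k\tr$ plus three cross terms (each of the form $\mat[\Delta]_Y\vec[v]\vec[r]\tr$ or $\mat[\Delta]_Y\vec[v]\vec[v]\tr\mat[\Delta]_Y\tr$) and sum. The bounded regressor $\vec[v]_k^{\rm i}$, combined with the $\mathcal{O}(1/N)$ bound above and finite fourth moments of the Gaussian noise, lets Cauchy-Schwarz in expectation send each cross term to zero in mean square. The leading term is handled by the ergodic theorem for the Gaussian-ARMA process $\vec[r]_k=\mat[C]\vec[z]_k+\vec[\nu]_k$: Assumption \ref{assume_stability} makes $\mat[A]$ Schur, so the transient from $\vec[z]_0$ decays geometrically and vanishes after averaging; the stationary component has absolutely summable auto-covariances, and standard results for linear processes driven by white Gaussian noise yield mean-square convergence of the sample auto-covariance to $\mat[\Xi]_{rr}^\indl$ at rate $\mathcal{O}(1/(N-\indl))$.

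For the frequency-domain estimator of Algorithm \ref{algorithm_correlation}, I would identify $\hat{\mathbcal{R}}_{\rm itp}$ with the DFT of the residual sequence $\{\hat{\vec[r]}_k\}$ after subtracting the excitation contribution through the zero-interpolated spectrum $\hat{\mathbcal{Y}}_{\rm itp}$; Theorem \ref{theorem_wiener_khinchin} then identifies the columnwise $\op[ifft]$ of $(\hat{\mathbcal{R}}_{\rm itp})_i\,.\!*\,\overline{(\hat{\mathbcal{R}}_{\rm itp})_j}$ with the circular auto-correlation of the residuals. Showing that the difference between the circular and linear auto-correlations is $\mathcal{O}(\indl/N)$ for fixed lag $\indl$ reduces this estimator to the time-domain one up to a vanishing correction, so the same mean-square limit holds.

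The main obstacle will be the joint dependence between $\mat[\Delta]_Y$ and the residual noise $\{\vec[r]_k\}$, since both are built from the same sample record: a naive independence argument on the cross terms is unavailable. The resolution is quantitative rather than structural, relying on the explicit $\mathcal{O}(1/\sqrt{N})$ estimator rate obtained from Corollary \ref{corollary_orthogonal} together with the Gaussian moment bounds; a secondary nuisance is making the transient contribution from $\vec[z]_0$ uniform in $\indl$ on the relevant range $\indl\leq M-1\ll N$, which is settled by the geometric factor $\rho(\mat[A])^k$ guaranteed by Assumption \ref{assume_stability}.
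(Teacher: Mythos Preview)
Your proposal is correct and reaches the same conclusion, but the route differs from the paper's in two visible ways.

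First, for the time-domain part you expand $\hat{\vec[r]}_{k+\indl}\hat{\vec[r]}_k\tr$ as a true term plus cross terms in $\mat[\Delta]_Y$, then control the cross terms with the $\mathcal{O}(1/N)$ estimator rate and Gaussian moments, and finish the main term by an ergodic argument for the stationary ARMA process. The paper instead works port-by-port and writes the residual directly as $\vec[\mathbcal{e}]_\indm^{\rm r}=\mat[M]_N\vec[\mathbcal{r}]_\indm$ with the least-squares projector $\mat[M]_N=\mat[I]_N-\mat[F]_N^{\rm i}(\mat[F]_N^{\rm i})^\dagger$; the sample mean squares $\delta_{\indm,\indn}^{\rm rr}=\tfrac{1}{N}(\vec[\mathbcal{r}]_\indm)\tr\mat[M]_N\vec[\mathbcal{r}]_\indn$ are then handled in one stroke by invoking Bartlett's classical $O(1/N)$ covariance bound for such quadratic forms. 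Your decomposition is more explicit about the dependence between $\mat[\Delta]_Y$ and $\{\vec[r]_k\}$ (which you correctly flag as the main obstacle), while the paper's projector formulation absorbs that dependence into a single quadratic form and outsources the rate to the cited result.

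Second, for the frequency-domain estimator you argue asymptotically: identify Algorithm \ref{algorithm_correlation} with a circular auto-correlation via Wiener--Khinchin and bound the circular-versus-linear discrepancy by $\mathcal{O}(\indl/N)$. The paper instead expands the DFT sums algebraically and uses the orthogonality of Lemma \ref{lemma_orthogonal_space} to collapse the triple sum, obtaining $\hat{\xi}_{\rm f}^\indl=\hat{\xi}_{\rm t}^\indl$ essentially exactly at finite $N$ (modulo the wraparound indices). Your asymptotic argument is perfectly adequate for the stated mean-square limit, but the paper's algebraic identity is sharper and avoids the extra $\mathcal{O}(\indl/N)$ bookkeeping.
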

\begin{proof} See \ref{appendix_lemma_correlation_consistency}. \end{proof}

Finally, it can be concluded that:
\begin{theorem} \label{corollary_covariance_problem}
    Suppose Assumptions \ref{assume_stability}-\ref{assume_full_identification} hold, the Problem \ref{problem_stochastic_identification} is solved by Algorithms \ref{algorithm_isim_fft}, \ref{algorithm_sim_acn}, \ref{algorithm_sim_bdx}, \ref{algorithm_correlation} and \ref{algorithm_cov}.
\end{theorem}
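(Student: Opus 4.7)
The plan is to chain together the consistency results already established in Sections~\ref{section_isim_identification}--\ref{section_noise_identification}, plugging each algorithm's output into the next and tracking where stochastic convergence has to be carried through a nonlinear step. Concretely, I would proceed in three stages.

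First, apply Theorem~\ref{corollary_determined_problem} with the ISP method fixed to be Algorithm~\ref{algorithm_isim_fft}: under Assumptions~\ref{assume_stability}--\ref{assume_discrete_periodic} and the PE condition, Lemma~\ref{lemma_isim_consistency} gives $\hat{\mat[Y]}^{\rm i}\to\mat[Y]^{\rm i}$, $\hat{\mat[U]}^{\rm i}\to\mat[U]^{\rm i}$ with probability one, and Algorithms~\ref{algorithm_sim_acn}--\ref{algorithm_sim_bdx} then recover $\hat n$ and $(\hat{\mat[A]},\hat{\mat[B]},\hat{\mat[C]},\hat{\mat[D]})$ consistently (up to a similarity transformation). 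The same run of Algorithm~\ref{algorithm_isim_fft} also produces the Fourier-domain quantities $\hat{\mat[Y]}_{\rm fft}$ and $\hat{\mat[U]}_{\rm fft}$ that Algorithm~\ref{algorithm_correlation} consumes, so no extra computation is needed.

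Second, invoke Lemma~\ref{lemma_correlation_consistency}: Assumption~\ref{assume_full_identification} removes the deterministic content across the entire frequency grid, so the residuals $\hat{\vec[r]}_k$ and $\hat{\vec[\tau]}_k$ are asymptotically pure-noise realizations of the stationary disturbance system in Remark~\ref{remark_ise_stationary_process}. Their empirical auto-correlations $\hat{\mat[\Xi]}_{rr}^\indl$ and $\hat{\mat[\Xi]}_{tt}$ delivered by Algorithm~\ref{algorithm_correlation} converge in mean square to $\mat[\Xi]_{rr}^\indl$ and $\mat[\Xi]_{tt}$, so in particular $\hat{\mat[\Sigma]}_{\tau\tau}=\hat{\mat[\Xi]}_{tt}$ is already consistent.

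Third, feed $(\hat{\mat[A]},\hat{\mat[C]},\hat{\mat[\Xi]}_{rr}^\indl)$ into Algorithm~\ref{algorithm_cov}. By \eqref{eq_als_vectorized_rr} the regression \eqref{eq_als_ls_model} has coefficient matrix $\mathbcal{M}$ built from the true $(\mat[C],\mat[A])$; by Assumption~\ref{assume_strong_detectable} together with $M\geq\lceil n^2/p^2\rceil+1$ (Assumption~\ref{assume_pe_stochastic}), Lemma~\ref{lemma_covariance_identifiability} says $\mathbcal{M}$ has full column rank and uniquely identifies $\mat[\Xi]_{zz}$. The estimated coefficients $\hat{\mathbcal{M}}$ and $\hat{\mathbcal{p}}$ are continuous functions of the consistent inputs, so $(\hat{\mathbcal{M}},\hat{\mathbcal{p}})\to(\mathbcal{M},\mathbcal{p})$. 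The true $\mat[\Xi]_{zz}$ is feasible for the SDP \eqref{eq_als_sdp_model} and achieves zero objective in the limit; consequently the SDP minimizer $\hat{\mat[\Xi]}_{zz}$ converges in probability to $\mat[\Xi]_{zz}$. The remaining Lyapunov-style identities $\hat{\mat[\Sigma]}_{\omega\omega}=\hat{\mat[\Xi]}_{zz}-\hat{\mat[A]}\hat{\mat[\Xi]}_{zz}\hat{\mat[A]}\tr$ and $\hat{\mat[\Sigma]}_{\nu\nu}=\hat{\mat[\Xi]}_{rr}^0-\hat{\mat[C]}\hat{\mat[\Xi]}_{zz}\hat{\mat[C]}\tr$ are continuous in their arguments, so consistency propagates to the final outputs.

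The delicate step is the third one, because the constraint set of \eqref{eq_als_sdp_model} is itself random (it depends on $\hat{\mat[A]}$, $\hat{\mat[C]}$, $\hat{\mat[\Xi]}_{rr}^0$), which precludes a direct continuous-mapping argument on the unconstrained least-squares map. I would handle this via Berge's maximum theorem: when $\mat[\Sigma]_{\omega\omega}\succ\mat[0]$ and $\mat[\Sigma]_{\nu\nu}\succ\mat[0]$, the true $\mat[\Xi]_{zz}$ lies in the interior of the limiting feasible set, so it remains feasible for the perturbed SDP with probability tending to one; uniform convergence of the strictly convex quadratic objective then forces the minimizers to converge. The boundary case (semi-definite but not positive-definite true covariances) can be absorbed by a standard $\varepsilon$-perturbation of the constraints, which is the only place an additional technical lemma beyond the material quoted in the excerpt would be required.
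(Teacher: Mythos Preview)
Your proposal is correct and follows essentially the same three-stage route as the paper: invoke Theorem~\ref{corollary_determined_problem} for the system parameters, Lemma~\ref{lemma_correlation_consistency} for the auto-correlations (hence $\hat{\mat[\Sigma]}_{\tau\tau}$), and then argue that the SDP~\eqref{eq_als_sdp_model} delivers a consistent $\hat{\mat[\Xi]}_{zz}$, from which $\hat{\mat[\Sigma]}_{\omega\omega}$ and $\hat{\mat[\Sigma]}_{\nu\nu}$ follow by continuity of the Lyapunov-type maps. The paper's own proof is considerably terser on the third step: it simply notes that the SDP is convex with non-empty feasible region and that both sides of~\eqref{eq_als_ls_model} converge, then asserts convergence of the minimizer without addressing the randomness of the constraint set. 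Your Berge-type argument (interior feasibility of the true $\mat[\Xi]_{zz}$ when the covariances are strictly positive definite, plus an $\varepsilon$-perturbation for the boundary case) is a genuine strengthening that fills a gap the paper leaves implicit; it is not required to match the paper's level of rigor, but it is the right way to make the step airtight.
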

\begin{proof}
        It can be derived from Lemma \ref{lemma_correlation_consistency} that $\mat[\Sigma]_{\tau\tau}$ can be estimated consistently in mean square using Algorithm \ref{algorithm_correlation}.
    Then $\hat{\mathbcal{p}}$ in \eqref{eq_als_ls_model} also converges in mean square.

        As for the estimation of $\mat[\Sigma]_{\omega\omega}$ and $\mat[\Sigma]_{\nu\nu}$, since the SDP \eqref{eq_als_sdp_model} is convex \cite{ConvexOptimization_Boyd_1} and overdetermined from Lemma \ref{lemma_covariance_identifiability}, and for every estimated $\hat{\mathbcal{p}}$, there exists one solution since the feasible region from constraints is always non-empty.
        When $N\to\infty$, the LHS and RHS of equation \eqref{eq_als_ls_model} converges proven by Theorem \ref{corollary_determined_problem}, the solution of SDP $\mat[\Xi]_{zz}$ also converges to the true value, and so do the covariance matrices.
\end{proof}

\begin{remark} \label{remark_nominal_covairance}
    Assumption \ref{assume_strong_detectable} is strong.
    When it is not satisfied, model \eqref{eq_als_ls_model} is underdetermined.
    But it is easy to verify that Algorithm \ref{algorithm_cov} can still generate a \emph{nominal} solution $(\hat{\mat[\Sigma]}_{\omega\omega},\hat{\mat[\Sigma]}_{\nu\nu},\hat{\mat[\Sigma]}_{\tau\tau})$ such that $\hat{\mat[\Xi]}_{rr}^{\rm i}$ calculated by equations \eqref{eq_correlation_zz}-\eqref{eq_correlation_rr}, and $\hat{\mat[\Sigma]}_{\tau\tau}$ still converge to the real value.
\end{remark}

\section{Simulation} \label{section_simulation}

\subsection{Implementation \& Metrics}
The algorithm and experiment are implemented in MATLAB, and the code is available on \href{https://github.com/wyqy/dcissim}{GitHub}.
All the experiments discard the first period of samples to suppress the impact of transient.
The SDP \eqref{eq_als_sdp_model} is solved via CVX toolbox \cite{CVX_CVX_1, CVX_Grant_1}.
Monte-Carlo method are used to evaluate the algorithms: unless otherwise specified, the experiments are conducted multiple times with the same algorithm and excitation but random SUTs and noises.

The experiments are analyzed in the aspects of estimation accuracy and computational speed.
The estimation accuracy is measured by the relative error of norms between the estimations and the true values.
The computational speed is compared by calculating the mean of execution times of multiple experiments.

As for system parameter estimation, denote $G$ as the transfer function form of systems under test (SUTs).
The relative errors of $H_2$ and $H_\infty$ norms are used as the criteria.
The covariance estimation uses two types of relative error criteria: one is of covariance matrices, and the other is of correlation matrices.
The criterion based on $\mat[\Sigma]_{\omega\omega}$ needs compensation because $\mat[\Sigma]_{\omega\omega}$ is not invariant under similarity transformation, which is conducted in two steps:
1) diagonalize $\hat{\mat[A]}$ by $\hat{\mat[T]}$;
2) normalize $\hat{\mat[C]}_i$ by $\hat{\mat[K]}$ which is diagonal and minimizes the error: $\hat{\mat[K]} = \min\limits_{\mat[K]} {\Vert \mat[K] \hat{\mat[T]} \hat{\mat[\Sigma]}_{\omega\omega} \hat{\mat[T]}\tr \mat[K]\tr - \mat[\Sigma]_{\omega\omega} \Vert}_F^2$.

The criteria used in the following experiments are listed in Table \ref{table_metrics}, where the subscript $i$ denotes the index of experiment.

\begin{table}[htb]
    \centering \caption{Estimation Metrics} \label{table_metrics}
    \footnotesize \begin{threeparttable} \begin{tabular}{c|c|l} \hline
        Object & Type & \makecell{Metric} \\ \hline
        \multirow{2}{*}{$G$}
            & $\Vert\cdot\Vert_2$ & $\delta_2 = \frac1E \sum_{i=1}^{E}{\frac{\Vert \hat{G}_i - G_i \Vert_2}{\Vert G_i \Vert_2}}$ \\ \cline{2-3}
            & $\Vert\cdot\Vert_\infty$ & $\delta_\infty = \frac1E \sum_{i=1}^{E}{\frac{\Vert \hat{G}_i - G_i \Vert_\infty}{\Vert G_i \Vert_\infty}}$ \\ \hline
        $\mat[\Sigma]_{\omega\omega}$ & $\Vert\cdot\Vert_F$ & $\widetilde{\eta}_\omega = \frac1E \sum_{i=1}^{E}{\frac{\Vert \hat{\mat[K]}_i \hat{\mat[T]}_i \hat   {\mat[\Sigma]}_{\omega\omega,i} \hat{\mat[T]}_i\tr \hat{\mat[K]}_i\tr - \mat[\Sigma]_  {\omega\omega} \Vert_F}{\Vert \mat[\Sigma]_{\omega\omega} \Vert_F}}$ \\ \hline
        $\mat[\Sigma]_{\nu\nu}$ & $\Vert\cdot\Vert_F$ & $\eta_\nu = \frac1E \sum_{i=1}^{E}{\frac{\Vert \hat{\mat[\Sigma]}_{\nu\nu,i} - \mat[\Sigma]_{\nu\nu} \Vert_F}{\Vert \mat[\Sigma]_{\nu\nu} \Vert_F}}$ \\ \hline
        $\mat[\Sigma]_{\tau\tau}$ & $\Vert\cdot\Vert_F$ & $\eta_\tau = \frac1E \sum_{i=1}^{E}{\frac{\Vert \hat{\mat[\Sigma]}_{\tau\tau,i} - \mat[\Sigma]_{\tau\tau} \Vert_F}{\Vert \mat[\Sigma]_{\tau\tau} \Vert_F}}$ \\ \hline
        $\mat[\Xi]_{rr}$ & $\Vert\cdot\Vert_F$ & $\eta_r = \frac1E \sum_{i=1}^{E}{\sum_{j=0}^{r}{\frac{\Vert \hat{\mat[\Xi]}_{rr,i}^j - \mat[\Xi]_{rr}^j \Vert_F}{\Vert \mat[\Xi]_{rr}^j \Vert_F}}}$ \\ \hline
        $t$ & $\bar{t}$ & $\bar{t} = \frac1E \sum_{i=1}^{E}{t_i}$ \\ \hline
    \end{tabular} \end{threeparttable}
\end{table}

\subsection{System Parameter Estimation using SISO SUTs}
This part is to evaluate the effect of discrete-cISSIM in system parameter estimation.
The excitation uses periodic \textbf{chirp} signal.
The angular frequency moves from $0$ to $0.4\pi\;rad/s$ linearly.
Set the SNRs as $30\rm dB$.
Set $T = 3000$, $N = 720,000$, $n = 4$, $m = 1$, $p = 1$ (SISO SUTs).

Two types of discrete-cISSIM using $\mathbb{F}^{\rm i} = \mathbb{F}^{\rm f}$ (full) and $\mathbb{F}^{\rm i} \subsetneq \mathbb{F}^{\rm f}$ (reduced) are compared.
The reduced $\mathbb{F}^{\rm i}$ is made by selecting the frequencies uniformly in the band of chirp signal
Other methods for comparison include prediction error minimization method (PEM) \cite{SystemIdentification_Ljung_1}, time-domain SIM \cite{SID_Overschee_2} and frequency-domain SIM \cite{SysIden_FreqSubspace_McKelvey_1}.
They are implemented by MATLAB \cite{SystemIdentificationToolbox_Ljung_1}.
All the hyperparameters such as the system orders in SIMs and discrete-cISSIMs are fixed to the true value to simplify the comparison.
The results based on $100$ random SISO SUTs are shown as Table \ref{table_siso_results}.

\begin{table}[htb]
    \centering \caption{Results of SISO SUTs} \label{table_siso_results}
    \footnotesize \begin{threeparttable} \begin{tabular}{c|c|c|c|c|c} \hline
            \multirow{2}{*}{Method} & \multicolumn{2}{c|}{$\delta_2$} & \multicolumn{2}{c|}{$\delta_\infty$} & $t(s)$ \\ \cline{2-6}
                                & mean   & std    & mean   & std    & mean   \\ \hline
            d-ISIM (f)\tnote{1} & 0.0193 & 0.0007 & 0.0200 & 0.0016 & 0.1806 \\ \hline
            d-ISIM (r)\tnote{2} & 0.0177 & 0.0037 & 0.0204 & 0.0066 & 0.0074 \\ \hline
            tfest               & 0.0146 & 0.0007 & 0.0177 & 0.0015 & 0.2564 \\ \hline
            SIM (time)          & 0.0079 & 0.0013 & 0.0089 & 0.0019 & 0.2744 \\ \hline
            SIM (freq)          & 0.0386 & 0.0179 & 0.0416 & 0.0211 & 0.0397 \\ \hline
        \end{tabular} \end{threeparttable}
    \begin{tablenotes}
        \footnotesize
        \item{1} refers to discrete-cISSIM with full $\mathbb{F}^{\rm i}$;
        \item{2} refers to discrete-cISSIM with reduced $\mathbb{F}^{\rm i}$.
    \end{tablenotes}
\end{table}

The $H_2$ and $H_\infty$ criteria show that discrete-cISSIM is kind of worse than classical time-domain methods, but is better than frequency-domain method in the system parameter estimation accuracy.
As for the computational efficiency, the discrete-cISSIM with reduced $\mathbb{F}^{\rm i}$ is the fastest among all the methods with nearly no performance loss.
Figure \ref{fig_siso_scatter} shows the value distribution of the two metrics for the $100$ experiments.
Figure \ref{fig_siso_recursive} shows the $H_2$ norm versus iteration of a plant using online discrete-cISSIM, which shows that the estimation is convergent into the true value.

\begin{figure}[htb]
    \centering \includegraphics[width=0.7\hsize]{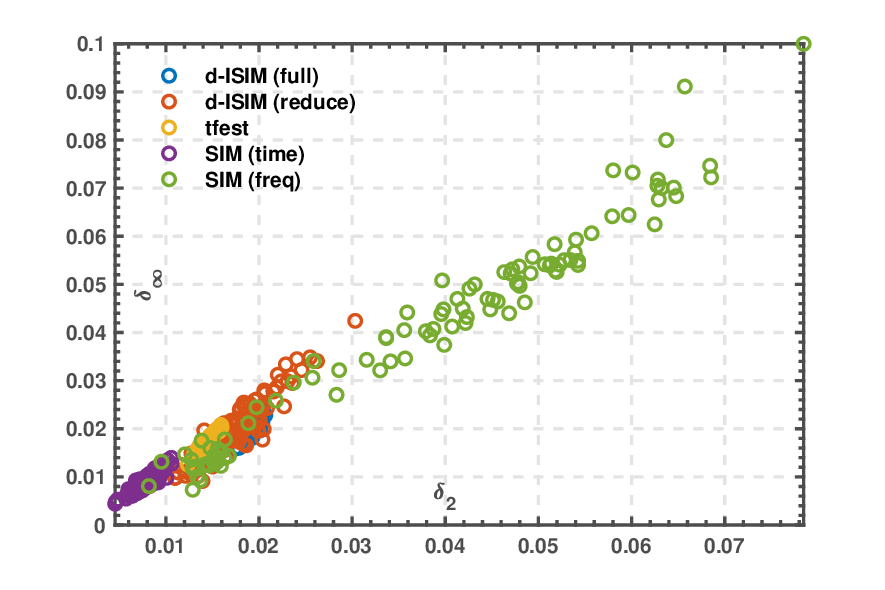}
    \caption{Scatter Plot of $H_2$ \& $H_\infty$ Error} \label{fig_siso_scatter}
\end{figure}

\begin{figure}[htb]
    \centering \includegraphics[width=0.7\hsize]{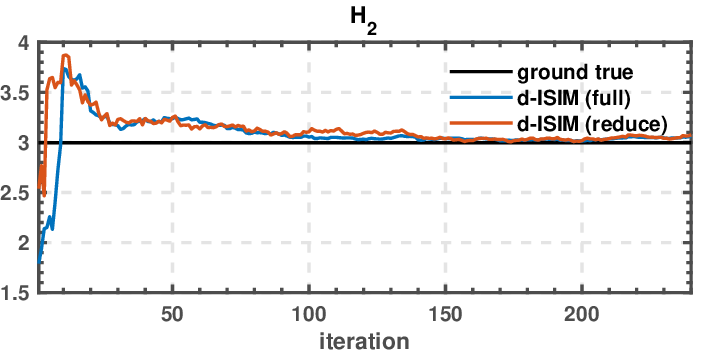}
    \caption{Line Chart of $H_2$ in Online Identification} \label{fig_siso_recursive}
\end{figure}

\subsection{Covariance Matrix Estimation using MIMO SUTs}
This part is to evaluate the effect of discrete-cISSIM in covariance estimation.
One progressive experiment is conducted including multiple iterations, and the latter one has additional $10$ periods of data than the former one.
The excitation uses \textbf{random} signal.
Set the SNRs as $60\rm dB$.
Set $T = 600$, $N_{max} = 240,000$.
The plant is fixed with the following MIMO SUT:
\begin{equation*}
    \mat[A] = \begin{bmatrix} 0.8 & 0 \\ 0 & 0.2 \end{bmatrix},
    \mat[B] = \begin{bmatrix} 1 \\ 1 \end{bmatrix},
    \mat[C] = \begin{bmatrix} 1 & 0 \\ 1 & 1 \end{bmatrix},
    \mat[D] = \begin{bmatrix} 0 & 0 \end{bmatrix},
\end{equation*}
with the covariance \eqref{eq_problem_covariance} which generates $60dB$ noise: $\mat[\Sigma] = \op[diag](2.30, 0.16, 2.30, 3.64, 0.11)\times 10^{-3}$.

Three types of covariance estimation methods are used: the first is the discrete-cISSIM; the second is the ALS with a Kalman observer \cite{ALSVector_Dunik_1}; the third is the subspace identification in time domain which uses innovation model \cite{SID_Overschee_1} to present the covariance.
The innovation model will be transformed into $\hat{\mat[\Xi]}_{rr}^{\rm i}$ here for comparison.
The hyperparameters are also fixed to the true value.
Define $r = 10$ in the correlation criteria of $\mat[\Xi]_{rr}$.
The estimation results are shown as Figure \ref{fig_mimo_covariance} and \ref{fig_mimo_correlation} shows.

\begin{figure}[htb]
    \centering \includegraphics[width=1\hsize]{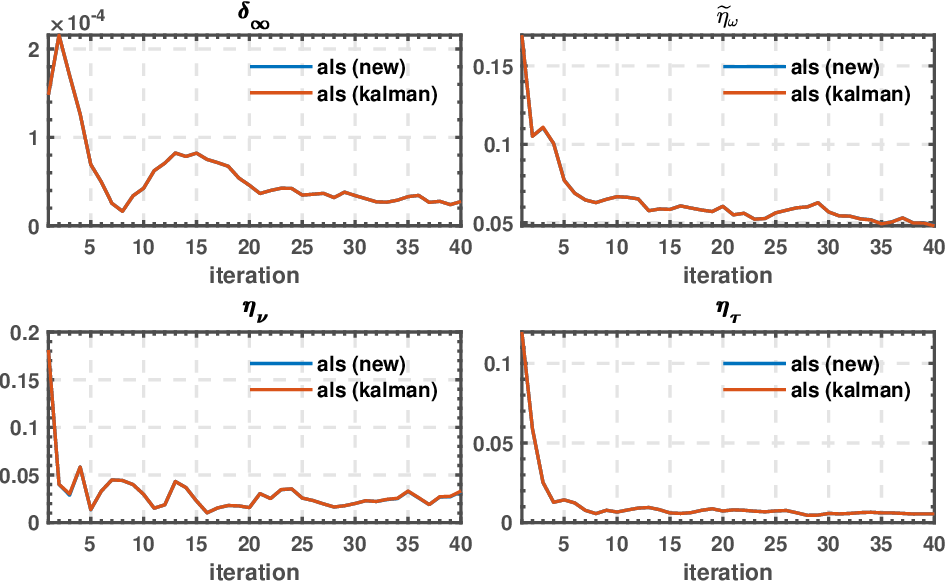}
    \caption{Line Chart of $H_\infty$ \& Covariance Error \tnote{1}\tnote{2}} \label{fig_mimo_covariance}
    \begin{tablenotes}
        \footnotesize
        \item{1} \textit{als (new)} refers to the method proposed in this paper;
        \item{2} \textit{als (kalman)} refers to the method uses Kalman observer.
    \end{tablenotes}
\end{figure}

\begin{figure}[htb]
    \centering \includegraphics[width=0.7\hsize]{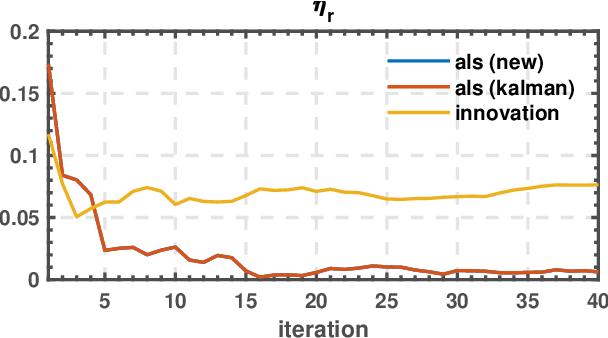}
    \caption{Line Chart of Correlation Error \tnote{1}} \label{fig_mimo_correlation}
    \begin{tablenotes}
        \footnotesize
        \item{1} \textit{Innovation} refers to the innovation model by SIM.
    \end{tablenotes}
\end{figure}

Figure \ref{fig_mimo_covariance} indicates that the Kalman observer is unnecessary when using ISP method.
It also shows that the covariance estimation is asymptotically convergent as $N$ increases.
Figure \ref{fig_mimo_correlation} compare the output correlation with the innovation model identified by subspace identification in time domain.
It shows that the method proposed in this paper can retrieve the noise information better than the traditional black-box method.

\section{Conclusion} \label{section_conclusion}
This paper generalizes the cISSIM from continuous-time to discrete-time, and from multi-sine to arbitrary discrete periodic signal.
The relationship with Fourier analysis is discussed.
A faster implementation is introduced which can decrease the computational burden significantly.
Finally, a novel covariance matrix estimation method in frequency domain is proposed.
The persistent condition and the consistency analysis are given.
The effectiveness of proposed method has also been demonstrated through numerical simulations.
The follow-up research includes improving the covariance estimation, and expanding it into arbitrary excitation.

\appendix

\section{Proof of Lemma \ref{lemma_orthogonal}} \label{appendix_orthogonal_projection}

The solution of excitation system \eqref{eq_excitation_system} can be derived explicitly as follows:
\begin{equation*} \begin{aligned}
        \vec[v]_k^{\rm f} & = \begin{bmatrix*} v_k^0 & v_k^1 & v_k^2 & \cdots & v_k^{2\lfloor{{T}\over{2}}\rfloor} \end{bmatrix*}\tr \in \R^\inds, \\
        v_k^{2r-1}        & = \sin(kr\omega+\psi_r), \quad \widetilde{v}_k^{2r} = \cos(kr\omega+\psi_r),
    \end{aligned} \end{equation*}
where $r = 1, \ldots, \lfloor{{T}\over{2}}\rfloor$.
Rewrite the elements with Euler's formula:
\begin{equation*} \begin{aligned}
        v_k^{2r-1} & = \sin(kr\omega+\psi_r) \\
            & = \frac{\sin\phi_r - \im\cos\phi_r}{2} e^{\im kr\omega} + \frac{\sin\phi_r + \im\cos\phi_r}{2} \overline{e^{\im kr\omega}}; \\
        \widetilde{v}_k^{2r} & = \cos(kr\omega+\psi_r) \\
            & = \frac{\cos\phi_r + \im\sin\phi_r}{2} e^{\im kr\omega} + \frac{\cos\phi_r - \im\sin\phi_r}{2} \overline{e^{\im kr\omega}}.
    \end{aligned} \end{equation*}

Define a function of $k$ as $\mathbcal{e}_\indr(k) = e^{\im k(\indr\mod{T})\omega}$, $\indr \in \Z$, which can construct a functional space $\mathbcal{V}$ endowed with (Hermitian) inner product:
\begin{equation} \label{eq_orthogonalproof_inner_product}
    \left\langle \mathbcal{e}_{\indm}, \mathbcal{e}_{\indn} \right\rangle = \sum_{k=0}^{T-1}{\mathbcal{e}_{\indm}(k)\overline{\mathbcal{e}_{\indn}(k)}}, \quad \indm,\indn\in\Z,
\end{equation}
The following lemma indicates the orthogonality (unitarity) of $\mathbcal{V}$:

\begin{lemma}[{{\cite[Sec.7, Lemma 1.1]{FourierAnalysis_Stein_1}}}] \label{lemma_orthogonal_space}
    \begin{equation*}
        \left\langle \mathbcal{e}_{\indm}, \mathbcal{e}_{\indn} \right\rangle =
        \begin{cases*}
            T, & $\indm = \indn$;    \\
            0, & $\indm \neq \indn$, \\
        \end{cases*}
    \end{equation*}
    where $\indm, \indn \in \Z$.
\end{lemma}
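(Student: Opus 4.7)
The plan is to reduce the inner product to a single geometric sum over the $T$-th roots of unity and then invoke the standard dichotomy. First I would combine the definitions: since $\mathbcal{e}_\indr(k) = e^{\im k(\indr\mod T)\omega}$ with $\omega = 2\pi/T$, taking the complex conjugate of $\mathbcal{e}_\indn(k)$ just flips the sign in the exponent, so
\begin{equation*}
    \left\langle \mathbcal{e}_{\indm}, \mathbcal{e}_{\indn} \right\rangle
    = \sum_{k=0}^{T-1} e^{\im k\, d\, \omega}, \qquad d := (\indm \bmod T) - (\indn \bmod T) \in \{-(T-1),\ldots,T-1\}.
\end{equation*}
Thus the whole claim reduces to evaluating this one geometric sum, with the two cases governed by whether $d \equiv 0 \pmod T$, which is equivalent to the hypothesis $\indm \equiv \indn \pmod T$ (the case labeled ``$\indm=\indn$'' in the statement).

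Next I would split into the two cases. If $d = 0$, every summand is $1$, so the sum trivially equals $T$, giving the first branch. If $d \neq 0$, then $|d| < T$ and hence $r := e^{\im d\omega} \neq 1$, so the closed form of a finite geometric series applies:
\begin{equation*}
    \sum_{k=0}^{T-1} r^k \;=\; \frac{r^T - 1}{r - 1} \;=\; \frac{e^{\im d\cdot 2\pi} - 1}{e^{\im d\omega} - 1} \;=\; \frac{0}{e^{\im d\omega} - 1} \;=\; 0,
\end{equation*}
which is the second branch.

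The only subtle point — and the one I would flag carefully rather than treat as an obstacle — is the dichotomy used in the statement. Writing the cases as ``$\indm = \indn$'' versus ``$\indm \neq \indn$'' is slightly informal, because the function $\mathbcal{e}_\indr$ depends on $\indr$ only through $\indr \bmod T$; the correct dichotomy is congruence modulo $T$, and indeed my derivation produces $T$ precisely when $\indm \equiv \indn \pmod T$. In the paper's application of Lemma \ref{lemma_orthogonal_space} (via Lemma \ref{lemma_orthogonal}) the indices run over $\{0,1,\ldots,T-1\}$, so the two conventions coincide and no real obstacle arises. The whole argument is therefore three lines of elementary computation, with the only ingredient beyond the geometric series being the identity $e^{\im 2\pi d}=1$ for integer $d$.
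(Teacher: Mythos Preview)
Your argument is correct and is precisely the standard geometric-series proof; the paper does not supply its own proof of this lemma but simply cites it from Stein's \emph{Fourier Analysis}, where the same computation appears. Your observation that the dichotomy should really be $\indm \equiv \indn \pmod T$ rather than literal equality is accurate and, as you note, harmless in the paper's application.
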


Denote the elements of $(\mat[F]_N^{\rm f})\tr\mat[F]_N^{\rm f} = \left\{f_{\indm,\indn}\right\}_{\indm,\indn = 0}^{2\lfloor{{T}\over{2}}\rfloor}$ by $f_{\indm,\indn} = \sum_{k=0}^{N-1}{v_k^\indm v_k^\indn}$.
When $N = kT$, $k = 1, 2, \ldots$, the following can be derived that for $\indm,\indl=0,\ldots,\lfloor{{T}\over{2}}\rfloor$, when $\indm \neq \indl$,
\begin{equation*}
    f_{\indm,\indn} =
    \begin{cases*}
        kT\cos\!\phi_{\frac{T}{2}}\sin\!\phi_{\frac{T}{2}}, & $\indm, \indl = \{{{T}\over{2}}, {{T}\over{2}}-1\}$, \\
        0, & \text{otherwise when } $\indm \neq \indn$;
    \end{cases*}
\end{equation*}
and when $\indm = \indl$,
\begin{equation*}
    \!\!\!\!f_{\indm,\indn} =
    \begin{cases*}
        \frac{kT}{2}, & $\lfloor{{\indm+1}\over{2}}\rfloor \neq \frac{T}{2}$; \\
        kT\sin^2\!\phi_{\frac{T}{2}}, & $\lfloor{{\indm+1}\over{2}}\rfloor = \frac{T}{2}, \indm \mod{2} = 1$, \\
        kT\cos^2\!\phi_{\frac{T}{2}}, & $\lfloor{{\indm+1}\over{2}}\rfloor = \frac{T}{2}, \indm \mod{2} = 0$,
    \end{cases*}
\end{equation*}
where $f_{0, 0} = \frac{kT}{2}$ holds since $\widetilde{\vec[v]}_{0,0} = 1/\sqrt{2}$.

Thus, the conclusion is derived. \qed

\section{Proof of Lemma \ref{lemma_isim_consistency}} \label{appendix_lemma_isim_consistency}
The proof firstly considers Algorithm \ref{algorithm_isim_ols} using unpartitioned representation: $\hat{\mat[R]}^{\rm i} = {2\over{kT}}(\mat[F]_N^{\rm i})\tr \mat[Z]_N$.

Model \eqref{eq_isim_ols_model} can be vectorized into flatten model:
\begin{equation} \label{eq_consistencyproof_flatten}
    \vec[z]_N = \mat[\Psi]_N^{\rm i} \vec[r]^{\rm i} + \mat[\Psi]_N^{\rm c} \vec[r]^{\rm c} + \vec[\delta]_N + \vec[e]_N,
\end{equation}
where $\vec[z]_N = \op[vec](\mat[Z]_N)$, $\mat[\Psi]_N^{\rm i} = \mat[I]_{m+p}\otimes\mat[F]_N^{\rm i}$, $\vec[r]^{\rm i} = \op[vec](\mat[R]^{\rm i})$, and $\mat[\Psi]_N^{\rm c}$, $\vec[r]^{\rm c}$ are defined similarly.
Split the vectorized disturbance into transient response and noises: $\op[vec](\mat[E]_N) = \vec[\delta]_N + \vec[e]_N = \left\{\delta_{i,k}\right\}_{i=0,1,\ldots,m+p}^{k=0,1,\ldots,N} + \left\{e_{i,k}\right\}_{i=0,1,\ldots,m+p}^{k=0,1,\ldots,N}$.

Model \eqref{eq_consistencyproof_flatten} asymptotically converges into the linear regression model:
Corollary \ref{corollary_orthogonal} states that $\mat[\Psi]_N^{\rm c} \vec[r]^{\rm c} = 0$ as $N=kT$, which is ensured by Algorithms \ref{algorithm_isim_ols}-\ref{algorithm_isim_fft}.
The transient response of $i^\text{th}$ port in $k^\text{th}$ sample is $\vec[\delta]_{i,k} = \vec[C]\tr\mat[A]^k\vec[z]_0$.
Considering that matrices' spectral norm is induced by vectors' Euclidean norm, define $\zeta\widetilde{\vec[z]}_0 = \vec[z]_0$ where ${\left\Vert\widetilde{\vec[z]}_0\right\Vert}_2 = 1$, then ${\left\Vert \mat[C]\mat[A]^k\vec[z]_0 \right\Vert}_2 \leq \zeta{\left\Vert \mat[C]\mat[A]^k \right\Vert}_2 \leq \zeta{\left\Vert\mat[C]\right\Vert}_2 {\left\Vert\mat[A]\right\Vert}_2^k$ for all $k=0,1,\ldots$ \cite{Matrix_Horn_1}.
Since $\mat[A]$ is square and Schur, ${\left\Vert\mat[A]\right\Vert}_2 = \rho(\mat[A]) < 1$.
That means ${\left\Vert\vec[\delta]\right\Vert}_F = \lim_{N\to\infty}{(\sum_{i=0}^{N-1}{{\left\Vert\mat[C]\mat[A]^k\vec[z]_0\right\Vert}_2^2})^{1\over2}} < \infty$.

Thus, model \eqref{eq_consistencyproof_flatten} can be simplified into the following linear-regression model:
\begin{equation} \label{eq_consistencyproof_regression}
    \vec[z]_N = \mat[\Psi]_N^{\rm i} \vec[r]^{\rm i} + \vec[e]_N.
\end{equation}
The result of Algorithm \ref{algorithm_isim_ols} differs from the solution to model \eqref{eq_consistencyproof_regression} by an error which is asymptotically converges to zero.
Considering that the noises $\vec[e]_N$ is a stationary DT ARMA process, Lemma 4 in \cite{cISSIM_CHuang_2} has proven that the solution to model \eqref{eq_consistencyproof_regression} converges to the true value $(w.p.1)$.
Therefore, Algorithms \ref{algorithm_isim_ols} converges with probability $1$.
As for Algorithms \ref{algorithm_isim_rls} and \ref{algorithm_isim_fft}, since they are derived from Algorithm \ref{algorithm_isim_ols} when $N=kT$, the conclusion is also valid. \qed

\section{Proof of Lemma \ref{lemma_correlation_consistency}} \label{appendix_lemma_correlation_consistency}
The proof firstly considers the time-domain method \eqref{eq_als_correlation_rr}-\eqref{eq_als_correlation_tt} to calculate the correlation matrices.

Divide the model \eqref{eq_isim_ols_model} into two parts:
\begin{equation} \label{eq_unbiasedproof_division}
    \begin{bmatrix*} \mat[\mathbcal{Y}]_N & \mat[\mathbcal{U}]_N \end{bmatrix*} = \mat[F]_N^{\rm f} \begin{bmatrix*} (\mat[Y]^{\rm f})\tr & (\mat[U]^{\rm f})\tr \end{bmatrix*} + \begin{bmatrix*} \mat[\mathbcal{R}]_N & \mat[\mathbcal{T}]_N \end{bmatrix*}.
\end{equation}

For the brevity of proof, it is assumed that $N$ samples are available for every $\indl$ in equation \eqref{eq_als_correlation_rr}.
Assumption \ref{assume_full_identification} eliminates the complement frequency term, \ref{appendix_lemma_isim_consistency} proves that transient term converges to zero when $N\to\infty$ and is ignored during covariance estimation.
Thus, equations \eqref{eq_als_correlation_rr}-\eqref{eq_als_correlation_tt} are unbiased.

Divide the multivariate model \eqref{eq_isim_ols_model} by different input/output ports:
$\hat{\mat[Y]}^{\rm i}$ and $\hat{\mat[U]}^{\rm i}$ can be estimated by applying least-squares on \eqref{eq_unbiasedproof_division}, which can be divided by different input ports, as the following shown:
\begin{align}
    \vec[\mathbcal{y}]_\indm & = \mat[F]_N^{\rm i} \mathfrak{y}^{\rm i}_\indm + \vec[\mathbcal{r}]_\indm, \indm = 1,\ldots,p, \label{eq_covarianceproof_y_vec} \\
    \vec[\mathbcal{u}]_\indr & = \mat[F]_N^{\rm i} \mathfrak{u}^{\rm i}_\indr + \vec[\mathbcal{t}]_\indr, \indr = 1,\ldots,m, \label{eq_covarianceproof_u_vec}
\end{align}
where $\vec[\mathbcal{y}]_\indm$, $\mathfrak{y}^{\rm i}_\indm$, and $\vec[\mathbcal{r}]_\indm$ are the $i^\text{th}$ vector of $\mat[\mathbcal{Y}]_N$, $(\mat[Y]^{\rm i})\tr$, and $\mat[\mathbcal{R}]_N$ from equation \eqref{eq_unbiasedproof_division} respectively.
$\vec[\mathbcal{u}]_\indr$, $\mathfrak{u}^{\rm i}_\indr$, and $\vec[\mathbcal{t}]_\indr$ are defined similarly.

Define the residual of least-squares solutions in \eqref{eq_covarianceproof_y_vec} and \eqref{eq_covarianceproof_u_vec} as
$\vec[\mathbcal{e}]_\indm^{\rm r} = \vec[\mathbcal{y}]_\indm - \mat[F]_N^{\rm i} \hat{\mathfrak{y}}_\indm^{\rm i} = \mat[M]_N \vec[\mathbcal{r}]_\indm$ and
$\vec[\mathbcal{e}]_\indr^{\rm t} = \mat[M]_N \vec[\mathbcal{t}]_\indr$,
where $\mat[M]_N = \mat[I]_N - \mat[F]_N^{\rm i}(\mat[F]_N^{\rm i})^\dagger = \mat[I]_N - \mat[F]_N^{\rm i}(\mat[F]_N^{\rm i})\tr$ is idempotent and symmetric.
Their mean squares are defined as
$\delta_{\indm,\indn}^{\rm rr} = {1\over N}(\vec[\mathbcal{e}]_\indm^{\rm r})\tr \vec[\mathbcal{e}]_\indn^{\rm r} = {1\over N}(\vec[\mathbcal{r}]_\indm)\tr \mat[M]_N \vec[\mathbcal{r}]_\indn$ and
$\delta_{\indr,\inds}^{\rm tt} = {1\over N}(\vec[\mathbcal{t}]_\indr)\tr \mat[M]_N \vec[\mathbcal{t}]_\inds$ respectively.

In this case, literature \cite{TimeSeries_Bartlett_1} has proven that the covariance matrices of mean squares defined before are convergence: $\cov(\delta_{\indm,\indn}^{\rm rr}) = O(\frac1N)$ and $\cov(\delta_{\indr,\inds}^{\rm tt}) = O(\frac1N)$ holds for every $\indm,\indn = 1,\ldots,p$ and $\indr,\inds = 1,\ldots,m$.
Assumption \ref{assume_pe_stochastic} ensures that there are infinite samples (denote as $K$) as $N\to\infty$.
Hence, $\cov(\hat{\mat[\Xi]}_{rr}^0) = \left\{\cov(\delta_{rr}^{\indm,\indn})\right\}_{\indm,\indn=1}^p$ and $\cov(\hat{\mat[\Xi]}_{tt}) = \left\{\cov(\delta_{\tau\tau}^{\indr,\inds})\right\}_{\indr,\inds=1}^m$ converge to zero asymptotically.
In other words, $\hat{\mat[\Xi]}_{rr}^0$ and $\cov(\hat{\mat[\Xi]}_{tt})$ are consistent in mean square.
This proof can also be generalized into $\hat{\mat[\Xi]}_{rr}^{\rm i}$ when $i=1,\ldots,M-1$ and is omitted for brevity.

Theorem \ref{theorem_wiener_khinchin} shows that the frequency-domain method, Algorithm \ref{algorithm_correlation}, is equivalent to the above time-domain method when $N\to\infty$.
For finite samples $N < \infty$, it is also valid for sampled version: 

Take the output noise as an example.
Disjoint two ports $\vec[\mathbcal{r}]_\indm$ and $\vec[\mathbcal{r}]_\indn$ from the multivariate noises $\vec[r]_N$.
The $\indl^\text{th}$ correlations by time-domain and frequency-domain method are:
\begin{align}
    \hat{\xi}_{\rm t}^\indl &= \frac{\sum_{k=1}^{N-\indl}{\hat{\mathbcal{r}}_\indm(k\+\indl)\overline{\hat{\mathbcal{r}}_\indn(k)}}}{N-\indl}, & &\textrm{(time-domain)} \label{eq_covarianceproof_timedomain} \\
    \hat{\xi}_{\rm f}^\indl &= \frac{\sum_{k=1}^N{\mathbcal{w}(k) e^{-\im\omega\indl k}}}{N-\indl}, & &\textrm{(freq-domain)} \label{eq_covarianceproof_freqdomain}
\end{align}
where $\mathbcal{w}(k) = \left[ \sum\limits_{\indp=1}^N{\hat{\mathbcal{r}}_\indm(\indp) e^{-\im\omega\indp k}} \right] \left[ \sum\limits_{\indq=1}^N{\hat{\mathbcal{r}}_\indn(\indq) e^{-\im\omega\indq k}} \right]$.

Equation \eqref{eq_covarianceproof_freqdomain} can be simplified into:
\begin{equation*}
    \hat{\xi}_{\rm f}^\indl = \frac{1}{N\-\indl} \sum_{\indp=1}^N{\sum_{\indq=1}^N{ \hat{\mathbcal{r}}_\indm(\indp) \hat{\mathbcal{r}}_\indn(\indq) \sum_{k=1}^N{e^{-\im\omega(\indl-\indp+\indq)k}}}}.
\end{equation*}
Lemma \ref{lemma_orthogonal_space} indicates that the RHS is not null only when $\indl-\indp+\indq = 0$, which is $\indp = \indl+\indq$.
Thus, $\hat{\xi}_{\rm f}^\indl = \hat{\xi}_{\rm t}^\indl$, and the frequency-domain method is equivalent to the time-domain one. \qed

\bibliographystyle{unsrt}
\bibliography{discrete_cissim_derivation_overleaf}

\noindent \textbf{Jingze You}
received the B.S. degree from Tongji University in electronic and information engineering in 2021.
He is currently pursuing the M.E. degree in electrical and information engineering.
His research interests include system identification and state estimation.
\\ \hspace*{\fill} \\\par

\noindent \textbf{Chao Huang}
received the B.S., M.S., and Ph.D. degrees from Zhejiang University, in 2010, 2012 and 2015, respectively, all in Electrical Engineering.
In 2016, he was a post-doctoral research fellow at the School of Engineering, the Australian National University.
From 2017 to 2019, he has been with the School of Automation, Hangzhou Dianzi University, as a lecturer.
He is now with the College of Electronic and Information Engineering, Tongji University, where he is currently an assistant professor.
His research interests include system identification, nonlinear and adaptive control and multi-agent systems.
\\ \hspace*{\fill} \\\par

\noindent \textbf{Hao Zhang}
received the B.Sc. degree in automatic control from the Wuhan University of Technology, Wuhan, China, in 2001, and the Ph.D. degree in control theory and control engineering from the Huazhong University of Science and Technology, Wuhan, in 2007.
She is currently a Professor with the School of Electronics and Information Engineering, Tongji University, Shanghai, China.
Her research interests include network-based control systems, multi-agent systems, and vehicle control systems.\par

\end{document}